\title{Evidence for Long-Tails in SLS Algorithms} %
\titlerunning{Evidence for Long-Tails in SLS Algorithms} %
\author{Florian Wörz}{Universität Ulm, Institut für Theoretische Informatik, Ulm, Germany \and \url{https://www.uni-ulm.de/in/theo/m/woerz/} }{florian.woerz@uni-ulm.de}{https://orcid.org/0000-0003-2463-8167}{Supported by the Deutsche Forschungsgemeinschaft (DFG) under project number 430150230, ``Complexity measures for solving propositional formulas''.}%
\author{Jan-Hendrik Lorenz}{Universität Ulm, Institut für Theoretische Informatik, Ulm, Germany}{jan-hendrik.lorenz@uni-ulm.de}{https://orcid.org/0000-0002-9554-4347}{}
\authorrunning{F. Wörz and J.-H. Lorenz} %
\keywords{Stochastic Local Search, Runtime Distribution, Statistical Analysis, Lognormal Distribution, Long-Tailed Distribution, SAT Solving} %
\newcommand{\ie}{i.\,e.,\ }
\newcommand{\eg}{e.\,g.,\ }
\newcommand{\egcite}{e.\,g.}
\newcommand{\iid}{i.\,i.\,d.\ }
\newcommand{\introduceterm}[1]{{\emph{#1}}}
\newcommand{\algoformat}[1]{\texttt{#1}}
\newcommand{\R}{\mathbb{R}}
\newcommand{\Rpos}{\R^{+}}
\newcommand{\Indikator}[1]{\mathbbm{1}_{#1}}
\newcommand{\D}{\textnormal{d}}
\newcommand{\set}[1]{\{ #1 \}}
\newcommand{\Set}[1]{\big\{ #1 \big\}}
\newcommand{\setdescr}[3][\mid]{\set{ #2 #1 #3 }}
\newcommand{\Setdescr}[3][\bigm\vert]{\Set{ #2 #1 #3 }}
\newcommand{\problemlanguageformat}[1]{\textrm{#1}\xspace}
\newcommand{\SAT}{\ensuremath{\problemlanguageformat{SAT}}}
\newcommand{\complexityclassformat}[1]{\textrm{\upshape{\textsf{#1}}}\xspace}
\newcommand{\NP}{\ensuremath{\complexityclassformat{NP}}}
\DeclareMathOperator{\Probop}{\mathrm{Pr}}
\DeclareMathOperator{\Expop}{\mathrm{E}}
\newcommand{\prob}[2][]{\Probop_{#1} \left[ #2 \right]}
\newcommand{\expectation}[2][]{\Expop_{#1} [ #2 ]}
\newcommand{\integral}[4][x]{\int\limits_{#2}^{#3}#4 \, \D #1}
\newcommand{\integralLow}[4][x]{\int_{#2}^{#3}#4 \, \D #1}
\newcommand{\refDef}[1]{Definition~\ref{#1}}
\newcommand{\refLem}[1]{Lemma~\ref{#1}}
\newcommand{\refTheo}[1]{Theorem~\ref{#1}}
\newcommand{\refCon}[1]{Conjecture~\ref{#1}}
\newcommand{\refAlg}[1]{Algorithm~\ref{#1}}
\newcommand{\refFig}[1]{Figure~\ref{#1}}
\newcommand{\refTab}[1]{Table~\ref{#1}}
\newcommand{\refEq}[1]{Equation~\eqref{#1}}
\newcommand{\Clauses}{\ensuremath{L}}
\newcommand{\Alfa}{\algoformat{Alfa}}
\newcommand{\probSAT}{\algoformat{probSAT}}
\newcommand{\SRWA}{\algoformat{SRWA}}
\newcommand{\YalSAT}{\algoformat{YALSAT}}
\newcommand{\YAL}{\YalSAT}
\newcommand{\GapSAT}{\algoformat{GapSAT}}
\newcommand{\Lim}[1]{\lim_{#1 \rightarrow \infty}}
\newcommand{\LIM}[1]{\lim\limits_{#1 \rightarrow \infty}}
\newcommand{\quantorsep}{\,\,}
\newcommand{%
	\tikzsetnextfilename{}%
	\input{.tex}%
}[1]{%
	\tikzsetnextfilename{#1}%
	\input{#1.tex}%
}
\pgfplotsset{compat=newest}
\newtheorem{restatelemmaorig}[theorem]{Lemma}
\newtheorem*{restatelemma}{Lemma \ref{lem:restateorig} Restated in Stronger Form}
\begin{document}

\maketitle

\begin{abstract}
	Stochastic local search (SLS) is a successful paradigm for solving the satisfiability problem of propositional logic.  A recent development in this area involves solving not the original instance, but a modified, yet logically equivalent one~\cite{LW20OnTheEffectOfLearnedClauses}. Empirically, this technique was found to be promising as it improves the performance of state-of-the-art SLS solvers.
	
	Currently, there is only a shallow understanding of how this modification technique affects the runtimes of SLS solvers. Thus, we model this modification process and conduct an empirical analysis of the hardness of logically equivalent formulas. Our results are twofold. First, if the modification process is treated as a random process, a lognormal distribution perfectly characterizes the hardness; implying that the hardness is long-tailed.
	This means that the modification technique can be further improved by implementing an additional restart mechanism.
	Thus, as a second contribution, we theoretically prove that all algorithms exhibiting this long-tail property can be further improved by restarts. Consequently, all SAT solvers employing this modification technique can be enhanced.	
\end{abstract}

\section{Introduction}

Although algorithms for solving the $\NP$-complete satisfiability problem, so-called SAT solvers, are nowadays remarkably successful in solving large instances, randomized versions of these solvers often show a high variation in the runtime required to solve a fixed instance over repeated runs~\cite{GSCK00HeavyTailedPhenomena}.
In the past, research on randomized algorithms often focused on studying the unsteady behavior of statistical measures like the mean, variance, or higher moments of the runtime over repeated runs of the respective algorithm. In particular, these measures are unable to capture the long-tailed behavior of difficult instances.
In a different line of work~\cite{FRV97SummarizingCSPHardness,GS97AlgorithmPortfolioDesign,RF97StatisticalAnalysis}, the focus has shifted to studying the runtime distributions of search algorithms, which helps to understand these methods better and draw meaningful conclusions for the design of new algorithms.

Recently, the hybrid solver \algoformat{GapSAT}~\cite{LW20OnTheEffectOfLearnedClauses} was introduced, combining a stochastic local search~(SLS) solver with a conflict-driven clause learning~(CDCL) solver.
In the analysis conducted,
it was empirically shown that adding new clauses is beneficial to the mean runtime (in flips) of the SLS solver \probSAT{}~\cite{BalintImplementationOfProbSAT} underlying the hybrid model.
The authors also demonstrated that although adding new clauses can improve the mean runtime, there exist instances where adding clauses can harm the performance of SLS.
This behavior is worth studying to help eliminate the risk of increasing the runtime of such procedures.

For this reason, we study the runtime (or more precisely, hardness) distribution of the procedure~\Alfa{}, introduced in this work, that models the addition of a set of logically equivalent clauses~$\Clauses$ to a formula~$F$ and the subsequent solving of this amended formula~$F^{(1)} := F \cup \Clauses$ by an SLS solver.
Our empirical evaluations show that this distribution is long-tailed.
We want to stress the fact that studies on the runtime distribution of algorithms are quite sparse %
even though
knowledge of the runtime distribution of an algorithm is extremely valuable:
\textcolor{lipicsGray}{\textbf{(1)}}
Intuitively speaking, if the distribution is %
long-tailed, one knows there is a risk of ending in the tail and experiencing very long runs;
simultaneously, the knowledge that the time the algorithm used thus far is in the tail of the distribution can be exploited to restart the procedure (and create a new logically equivalent instance~$F^{(2)}$).
We will prove this statement in a rigorous manner 
for all long-tailed algorithms.
\textcolor{lipicsGray}{\textbf{(2)}}
Given the distribution of an algorithm's sequential runtime, it was shown 
how to predict and quantify the algorithm's expected speedup due to parallelization~\cite{ATCTC13UsingSequentialRuntimeDistributions}. %
\textcolor{lipicsGray}{\textbf{(3)}}
If the distribution of hardness is known, 
experiments with 
few
instances can lead to 
parameter estimations
of the underlying distribution~\cite{FRV97SummarizingCSPHardness}.	
\textcolor{lipicsGray}{\textbf{(4)}}~Knowledge of the distribution can help 
compare competing algorithms: %
one can test if the difference in the means of 
algorithm runtimes is significant if the distributions are known~\cite{FRV97SummarizingCSPHardness}.

\subsection{Our Contributions}

Our contributions consist of an empirical as well as theoretical part, specified below.

\subparagraph*{Statistical Runtime/Hardness Distribution Analysis.}

By conducting a plethora of experiments (total CPU time $80$ years) and using several statistical tools for the analysis of 
empirical
distributions,
we conjecture that
\Alfa{} equipped with SLS solvers based on Schöning's Random Walk Algorithm~\cite{Schoening02AProbabilisticAlgorithm}, \SRWA{} for short, follows a long-tailed distribution (Conjecture~\ref{conj:weak}).
The evidence obtained further suggests that this distribution is, in fact, lognormal (Conjecture~\ref{conj:strong}).
We measure the goodness-of-fit of our results over the whole domain using the $\chi^2$-statistic.

\subparagraph*{Restarts Are Useful For Long-Tailed Algorithms.}

Lorenz~\cite{Lorenz18RuntimeDistributions} has analyzed
the lognormal and the generalized Pareto distribution
for the usefulness of restarts and their optimal restart times.
Given that our Strong~Conjecture~\ref{conj:strong} holds, this result implies that restarts are useful for~\Alfa{}.
We will also show that this is the case if only the Weak~Conjecture~\ref{conj:weak} holds: We theoretically prove that restarts are useful for the class of algorithms exhibiting a long-tailed distribution.

\subsection{Related Work and Differentiation}

In~\cite{FRV97SummarizingCSPHardness}, the authors presented empirical evidence for the fact that the distribution of the effort (more precisely, the number of consistency checks) required for backtracking algorithms to solve constraint satisfaction problems randomly generated at the 50\,\% satisfiable point can be approximated by the Weibull distribution (in the satisfiable case) and the lognormal distribution (in the unsatisfiable case).
These results were later extended to a wider region around the 50\,\% satisfiable point~\cite{RF97StatisticalAnalysis}.
It should be emphasized that this study created all instances using the same generation model. This resulted in the creation of similar yet logically non-equivalent formulas.
We, however, will firstly use different models to rule out any influence of the generation model and secondly generate logically equivalent modifications~of~a base instance (see \refAlg{algo:main}).
This approach lends itself to the analysis of existing~SLS solvers~\cite{LW20OnTheEffectOfLearnedClauses}.
The major advantage is that the conducted work is not lost in the case~of a restart: only the logically equivalent instance could be \mbox{changed while keeping the current assignment}.

In~\cite{GSCK00HeavyTailedPhenomena}, the cost profiles of combinatorial search procedures were studied. 
The authors showed 
that 
they
are often characterized by %
the
Pareto-Lévy distribution %
and %
empirically demonstrated
how rapid randomized restarts can 
eliminate %
this tail behavior.
We will theoretically prove the effectiveness of restarts for the larger class of long-tailed distributions.

The paper~\cite{ATCTC13UsingSequentialRuntimeDistributions} studied the 
solvers \algoformat{Sparrow} and \algoformat{CCASAT}
and found that for randomly generated instances
the lognormal distribution is a good fit
for the runtime distributions.
For this, the Kolmogorov--Smirnov %
statistic $\sup_{t \in \R} | \hat{F}_n(t) - F(t) |$ was used.
Although the KS-test is very versatile, this comes with the disadvantage that its statistical power is rather low.
Clearly, the KS statistic is also nearly useless in the tails of a distribution: A high relative deviation of the empirical from the theoretical cumulative distribution function in either tail results in a very small absolute deviation.
It should also be remarked that the paper studies only few formulas in just two domains, 10 randomly generated and 9 crafted.
Our work will address both shortcomings in this paper: The $\chi^2$-test gives equal importance to the goodness-of-fit over the entire support; and various instance domain models (both theoretical and applied) are considered in this paper.

\begin{remark*}
	Unfortunately, the term heavy- or long-tailed distribution is not used consistently in the literature.
	We will follow~\cite{foss2011introduction} and use the notion given in~\refDef{def:long_tail}.
\end{remark*}

\section{Preliminaries}

We assume familiarity with terminologies such as Boolean variable, literal, clause, CNF formula, the SAT problem, and assignment flips in SLS solvers and refer the reader to \egcite~\cite{ST13SATProblem}.
We furthermore trust that the reader has basic knowledge of the proof system Resolution \cite{Blake37Canoncial,Robinson65Machine-oriented}.
\introduceterm{Stochastic local search} (SLS) solvers operate on complete assignments for a formula~$F$.
These solvers are started with a randomly generated complete initial assignment~$\alpha_0$.
If~$\alpha_0$ satisfies~$F$, a solution is found.
Otherwise, the SLS solver tries to find a solution by performing a random walk over the set of complete assignments for the underlying formula.
A formula~$F$ \introduceterm{logically implies} a clause~$C$ if every complete truth assignment which satisfies~$F$ also satisfies~$C$, for which we write $F \vDash C$. If $L$ is a set of clauses we write $F \vDash L$ if $F \vDash C$ for all $C \in L$.

\begin{definition}
	\label{def:RestartsUseful}
	Let $X$ be a random variable 
	for
	the runtime of an 
	SLS 
	algorithm~$\mathcal{A}$ on some input.
	For $t > 0$, the algorithm~$\mathcal{A}_t$
	is obtained by restarting~$\mathcal{A}$ after time~$t$ if no solution was found. %
	Restarts are \introduceterm{useful} if there is a $t > 0$ %
	such that
	\begin{align*}
		\expectation{X_t} < \expectation{X},
	\end{align*}
	where $X_t$ models the runtime of~$\mathcal{A}_t$. 
\end{definition}

\begin{definition}[\cite{norman1994continuous}]
	\label{def:cdf_quantile}
	Let $X$ be a real-valued random variable.
	\begin{itemize}
	\item
		Its \introduceterm{cumulative distribution function}
		(cdf) %
		is the function $F \colon \R \to [0,1]$ with 
			\[
			F(t) := \prob{X \leq t}. %
			\]
	\item
		Its \introduceterm{quantile function} $Q \colon (0,1) \to \R$ is given by $Q(p) := \inf \setdescr{t \in \R}{F(t) \geq p}$.
	\item
		A non-negative, integrable function~$f$ such that 
		$F(t) = \integralLow[u]{-\infty}{t}{f(u)}$
		is 
		called 
		\introduceterm{probability density function} (pdf) of~$X$.
	\end{itemize}
\end{definition}

\begin{definition}[\cite{Wicksell17OnLogarithmicCorrelation}]
	An absolutely continuous, positive random variable~$X$ is \introduceterm{(three-parameter) lognormally distributed} with parameters 
	$\sigma^2 > 0$, $\gamma > 0$, and $\mu \in \R$, 
	if $\log(X - \gamma)$ is normally distributed with mean $\mu$ and variance $\sigma^2$. In the following, we refer to $\sigma$ as the \introduceterm{shape}, $\mu$ as the \introduceterm{scale}, and $\gamma$ as the \introduceterm{location parameter}.
\end{definition}

\begin{definition}
	Let $X_1, \dots, X_n$ be independent, identically distributed real-valued random variables
	with realizations $x_i$ of $X_i$.
	Then the \introduceterm{empirical cumulative distribution function} (\introduceterm{ecdf}) of the sample $(x_1, \dots, x_n)$ is defined as
		\[
		\hat{F}_n(t) := \frac{1}{n} \sum_{i=1}^n \Indikator{\set{x_i \leq t}}, \quad t \in \R,
		\]	
	where $\Indikator{A}$ is the indicator of event $A$.
\end{definition}

\section{Design of the Adjusted Logical Formula Algorithm Alfa}
\label{sec:DesignOfAlfa}

Our SLS solver \Alfa{} (Adjusted logical formula algorithm) receives a satisfiable formula~$F$ as input.
The algorithm then proceeds by adding to~$F$ a set~$\Clauses$ of logically generated clauses.
It finally calls an SLS solver to solve the clause set~$F \cup \Clauses$.

\begin{algorithm}[htb]
	\textbf{Input:} Boolean formula $F$, \textbf{Promise:} $F \in \SAT$\\
	\BlankLine
	Generate \textbf{randomly} a set $\Clauses$ of clauses such that $F \vDash \Clauses$\\
	Call \algoformat{SLS}$(F \cup \Clauses)$ for some SLS solver~\algoformat{SLS}
	\caption{\Alfa{} acts as a base algorithm that can use different SLS algorithms.}
	\label{algo:main}
\end{algorithm}

Definition~\ref{def:width-w-res} is used
in Algorithm~\ref{algo:res} as a natural way to sample a set~$\Clauses$ of logically equivalent clauses with respect to a base instance~$F$.

\begin{algorithm}[htb]
	
	\SetKw{KwWithProb}{with probability}
	\SetKw{KwDo}{do}	
	\SetKwFunction{Shuffle}{Shuffle}
	
	\textbf{Input:} Boolean formula $F$, integer $w$, probability $p \in (0,1]$, Boolean \textit{shuffle}\\
	\BlankLine
	
	\ForEach{$R \in \operatorname{Res}_w^{\ast}(F) \setminus F$}{
		\KwWithProb $p$ %
		\KwDo
		$L := L \cup \set{R}$
	}
	
	\lIf{shuffle}{\textbf{return} \Shuffle{$L$} \textbf{else} \textbf{return} $L$}
	
	\caption{Generation of the random set $\Clauses$ with resolution}
	\label{algo:res}
\end{algorithm}

\begin{definition}
	\label{def:width-w-res}
	Let $F$ be a clause set, and $w$ be a positive integer. We define the operator
	\[
	\operatorname{Res}_w(F) := F \cup \Setdescr{R}{R \text{ is a resolvent of two clauses in } F \text{ and } |R| \leq w}.
	\]
	Also, we inductively define $\operatorname{Res}_w^{0}(F) := F$ and
	\[
	\operatorname{Res}_w^{n+1}(F) := \operatorname{Res}_w \! \big( \operatorname{Res}_w^{n}(F) \big), \text{ for } n \geq 0.
	\]
	Finally, we set	
	\[
	\operatorname{Res}_w^{\ast}(F) := \bigcup_{n \geq 0} \operatorname{Res}_w^{n}(F).
	\]
\end{definition}

\section{Empirical Evaluation}

\subsection{Experimental Setup, Instance Types, and Solvers Used}

Hoos and Stützle~\cite{HS98EvaluatingLasVegas} introduced the concept of \emph{runtime distribition} to characterize the cdf of Las Vegas algorithms, where the runtime can vary from one execution to another, even with the same input.
To obtain enough data for a fitting of such a distribution, for each base instance $F$ we created 5000 modified instances $F^{(1)}, \dots, F^{(5000)}$ by generating resolvent sets $\Clauses^{(1)}, \dots, \Clauses^{(5000)}$ %
each by using \refAlg{algo:res} with $w=4$ and a value of $p$ such that the expected number of resolvents being added was $\frac{1}{10} |F|$.
Note that we also conducted a series of experiments to rule out the influence of $p$ on our results.
Each of these modified instances was solved 100 times, each time using a different seed.
For $i = 1, \dots, 5000$ and $j = 1, \dots, 100$ we thus obtained the values $\mathsf{flips}_{S}(F^{(i)}, s_j)$
indicating how many flips were used to solve the modified instance~$F^{(i)}$ with solver~$S$ when using the seed~$s_j$.
Next, we calculated the mean number of flips $\mathsf{mean}_S(F^{(i)}) := \frac{1}{100} \sum_{j=1}^{100} \mathsf{flips}_{S}(F^{(i)}, s_j)$ required to solve $F^{(i)}$ with solver~$S$ whose hardness distribution we are going to analyze.

All experiments were performed on bwUniCluster 2.0 and 
three 
local servers.
Sputnik~\cite{VLSKK15Sputnik} was used to distribute the computation and to parallelize 
the 
trials.
Due to the heterogeneity of the 
computer 
setup, measured runtimes are not directly comparable to each other.
Consequently, we instead measured the number of variable flips performed by the SLS solver.
This is a hardware-independent performance measure with the benefit that it can also be analyzed 
theoretically.
To give an indication of how flips relate to wall-clock time, 
one million flips take about one second of computing time on one of our servers.
To give an idea of the computational effort involved, obtaining
the data for the ecdf of a 100 variable base instance with \SRWA{} took an average of 17,193,517 seconds ($\approx 199$~days) when unparallelized.
This clearly prohibited examining instances having a number of variables currently being routinely solved by the state-of-the-art SLS algorithms.
For the experiments the following instance types were used:

\begin{enumerate}
	\item \textbf{\textcolor{lipicsGray}{Hidden Solution:}}		
	We implemented the CDC algorithm~\cite{BC18UsingAlgorithmConfigurationTools, BHLRTWZ02Hiding} in~\cite{LW21SourceCodeOfConcealSATgen}
	to generate instances with a hidden solution. 
	For this, 
	at the beginning,
	a complete assignment~$\alpha$ is specified to ensure the generated formula's satisfiability.
	Then, repeatedly a randomly generated clause~$C$ is added to the formula with a weighted probability $p_i$ depending on the number~$i$ of correct literals in $C$ with respect to $\alpha$. 
	We included this type of instances because SLS solvers struggle to solve such instances. Experiments like these might be beneficial to find theoretical reasons for this behavior.	
	
	\item \textbf{\textcolor{lipicsGray}{Hidden Solution With Different Chances:}}			
	We also created sets of formulas with different underlying $p_i$ values to rule out the influence of these.

	\item \textbf{\textcolor{lipicsGray}{Uniform Random:}}	
	To generate
	uniform, random $k$-SAT instances with $n$ variables and $m$ clauses, each
	clause is generated by sampling $k$~literals uniformly and independently. 
	Using Gableske's kcnfgen~\cite{kcnfgen},
	we generated formulas with $n \in \set{50,60,70,80,90}$ variables and
	a clause-to-variable ratio~$r$ close to the \emph{satisfiability threshold}~\cite{MMZ06ThresholdValues} of $r \approx 4.267$.
	We checked each instance 
	with \texttt{Glucose3}~\cite{AS09Glucose,ES03MiniSat}
	for satisfiability
	until we had 5 formulas of each size.

	\item \textbf{\textcolor{lipicsGray}{Factoring:}} 
	These formulas encode the factoring problem in the interval $\set{128, \dots, 256}$ and were generated with~\cite{Diemer21GenFactorSat}.

	\item \textbf{\textcolor{lipicsGray}{Coloring:}}
	These formulas assert that a 
	graph 
	is colorable with 
	3 colors.
	We generated 
	these formulas, 
	using~\cite{LENV17CNFgen}, over random graphs with $n$ vertices and $m = 2.254n$ edges in expectation,
	which is slightly below the \introduceterm{non-colorability threshold}~\cite{KaporisKS00}.
	We
	obtained 32~satisfiable instances in 150 variables.

\end{enumerate}

Our experiments investigated leading SLS solvers where the dominating component is based on the random walk procedure proposed in~\cite{Schoening02AProbabilisticAlgorithm}.
In this paper, Schöning's Random Walk Algorithm \SRWA{} was introduced, which is one of the solvers we used.
The \probSAT{} solver family~\cite{BalintImplementationOfProbSAT} is based on this approach.
One of these solvers won the random track of the SAT competition 2013~\cite{SAT13}.
Another advancement of \algoformat{SWRA} was implemented as \YalSAT{}~\cite{YalSAT}, which won the random track of the SAT competition 2017~\cite{SAT17}.
These performances and similarities were reasons for choosing \SRWA{}, \probSAT{}, and \YalSAT{} as SLS solvers for this paper. The connection to \GapSAT{}~\cite{LW20OnTheEffectOfLearnedClauses} is another case in point.

We excluded the solvers \algoformat{DCCAlm}~\cite{LCS16DCCAlm} and \algoformat{CSCCSat}~\cite{LCWS16CSCCSat} (combining \algoformat{FrwCB}~\cite{LCWS13FRW} and \algoformat{DCCASat}~\cite{LCWS14DoubleConfiguration}) as all of these 
depend on 
heuristics (like \textsf{CC}, \textsf{BM}, \textsf{CSDvars}, \textsf{NVDvars}, \textsf{SDvars}) that ultimately reduce the probabilistic nature when choosing the next variable to flip.

For \SRWA{} we conducted most of our experiments: All instance types were tested, including different change values for the generation of the hidden solution.
For \probSAT{}, 55 hidden solution instances with $n \in \set{50,100,150,200,300,800}$ were used.
Since \YAL{} can be regarded as a \probSAT{} derivate, 
we tested \YAL{} with 10 hidden solution instances with 300 variables each.

\subsection{Experimental Results and Statistical Evaluation}

The goal of this section is to explore the scenarios described above in more detail. We are particularly interested in how the hardness of an instance changes when logically equivalent clauses are added in the manner described above. To characterize this effect as accurately as possible, studying the ecdf is the most suitable method for this purpose. In turn, the ecdf can be described using well-known distribution types such as \eg the normal distribution. In the following, we shall demonstrate that the three-parameter lognormal distribution, in particular, provides an exceptionally accurate description of the runtime behavior, and this is true for all considered problem domains and all solvers. The results are so compelling that we ultimately conjecture that the runtimes of \Alfa{}-type algorithms all follow a lognormal distribution, regardless of the considered problem domain. 

To illustrate this point, we first demonstrate our approach using two base instances. The first one is a factorization instance that was solved by \SRWA{}. The second instance has a hidden solution and was solved by \probSAT{}. For later reference, we refer to the first instance as~$A$ and to the second instance as~$B$. As described above, we obtain 5000~samples for each base instance. Using these data points, we estimate the lognormal distribution's three parameters by applying the maximum likelihood method (see~\cite{AllData}). After that, one can visually evaluate the suitability of the fitted lognormal distribution for describing the data. A useful method of visualizing the suitability is to plot the ecdf and the fitted cdf on the same graph.

\begin{figure}[htb]
	\centering
	\begin{subfigure}[b]{.465\linewidth}
		\centering
		\definecolor{mycolor1}{rgb}{0.00000,0.44700,0.74100}%
\definecolor{mycolor2}{rgb}{0.85000,0.32500,0.09800}%

\begin{tikzpicture}

	\begin{axis}[
		width=5.0in*0.6,
		height=3.6in*0.6, %
		tick label style={font=\scriptsize},
		tick align=inside,
		tick pos=left,
		x grid style={white!69.02!black},
		xmajorgrids,
		xmin=0, xmax=600000000,
		xtick style={color=black},
		y grid style={white!69.02!black},
		ymajorgrids,
		ymin=0.0, ymax=1.0,
		ytick style={color=black},
		xlabel={\footnotesize{flips}},
		legend pos=south east,
	]
	\addplot [
		semithick, color=mycolor1, line width=1.2pt, mark options={solid}
	] table {tikz/tables/factor_empirical_cdf.csv};
	\addlegendentry{\footnotesize{empirical}}
	\addplot [
		color=black, dashed, black, line width=1.2pt
	] table {tikz/tables/logn_fit_cdf_lin.csv};
	\addlegendentry{\footnotesize{fitted}}
	\end{axis}
\end{tikzpicture}
	\end{subfigure}%
	\hfill%
	\begin{subfigure}[b]{.465\linewidth}
		\centering
		\definecolor{mycolor1}{rgb}{0.00000,0.44700,0.74100}%
\definecolor{mycolor2}{rgb}{0.85000,0.32500,0.09800}%

\begin{tikzpicture}

	\begin{axis}[
		width=5.0in*0.6,
		height=3.6in*0.6,
		tick label style={font=\scriptsize},
		tick align=inside,
		tick pos=left,
		x grid style={white!69.02!black},
		xmajorgrids,
		xmin=0, xmax=15000,
		xtick style={color=black},
		y grid style={white!69.02!black},
		ymajorgrids,
		ymin=0.0, ymax=1.0,
		ytick style={color=black},
		xlabel={\footnotesize{flips}},
		yticklabels={,,},
		legend pos=south west,
	]
	\addplot [
		semithick, color=mycolor1, line width=1.2pt, mark options={solid}
	] table {tikz/probsat/empirical_cdf.csv};
	\addlegendentry{\footnotesize{empirical}}
	\addplot [
		color=black, dashed, black, line width=1.2pt
	] table {tikz/probsat/logn_fit_cdf_lin.csv};
	\addlegendentry{\footnotesize{fitted}}
	\end{axis}
\end{tikzpicture}
	\end{subfigure}%
	\caption{%
		The ecdf and fitted cdf of the hardness distribution %
		of instance $A$ (left) and $B$ (right).
	}
	\label{fig:cdf-ecdf}
\end{figure}
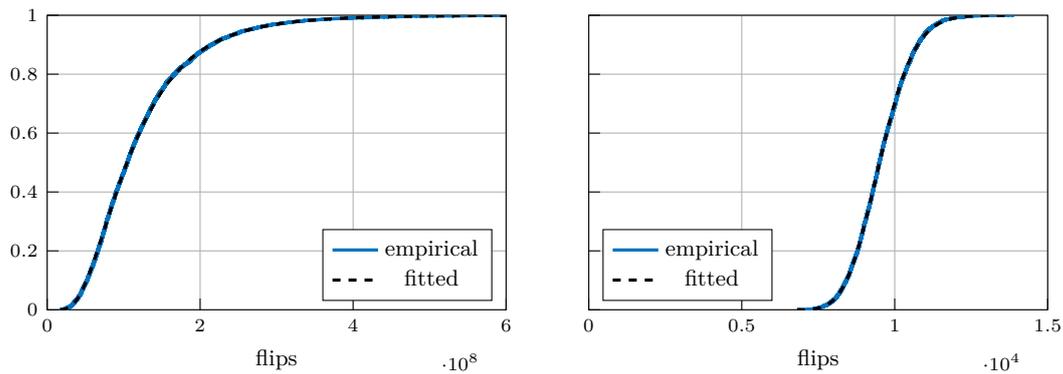
Such a comparison is illustrated in \refFig{fig:cdf-ecdf} for the two instances~$A$ and~$B$. In both cases, no difference between the empirical data of the ecdf and the fitted distribution can be detected visually. In other words, the absolute error between the predicted probabilities from the fitted cdf versus the empirical probabilities from the ecdf is minuscule. Even though these are only two examples, it should be noted that these two instances are representative of the behavior of the investigated algorithms. Hardly any deviation could be observed in this plot type for all instances and all algorithms. All data is published under~\cite{AllData}.

For the analysis, however, one should not confine oneself to this plot type. Although absolute errors can be observed easily, relative errors are more difficult to detect. Such a relative error may have a significant impact when used for decisions such as restarts. To illustrate this point, suppose that the true probability of a run of length $\ell$ is $0.0001$. In contrast, the probability estimated based on a fit is $0.001$. As can be seen, the absolute error of $0.0009$ is small, whereas the relative error of $10$ is large. If one were to perform restarts after $\ell$ steps, the actual expected runtime would be ten times greater than the estimated expected runtime. Thus, the erroneous estimate of that probability would have translated into an unfavorable runtime. This example should illustrate the importance of checking the tails of a distribution for errors as well. 

The left tail, \ie the probabilities for 
very
small values, can be checked visually by plotting the ecdf and 
fitted cdf with both axes logarithmically scaled.
Thereby, the probabilities for extreme events (in this case, especially easy instances) can be measured accurately. 
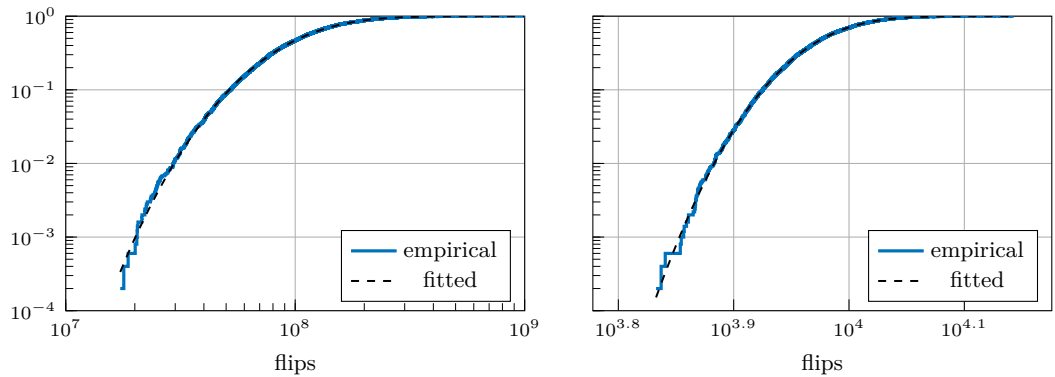
\begin{figure}[htb]
	\centering
	\begin{subfigure}[t]{.465\linewidth}
		\centering
		\definecolor{mycolor1}{rgb}{0.00000,0.44700,0.74100}%
\definecolor{mycolor2}{rgb}{0.85000,0.32500,0.09800}%

\begin{tikzpicture}

	\begin{axis}[
	width=5.0in*0.6,
	height=3.6in*0.6,
		tick label style={font=\scriptsize},
		log basis x={10},
		log basis y={10},
		tick align=inside,
		tick pos=left,
		x grid style={white!69.02!black},
		xmajorgrids,
		xmin=10000000, xmax=1000000000,
		xmode=log,
		xtick style={color=black},
		y grid style={white!69.02!black},
		ymajorgrids,
		ymin=0.0001, ymax=1.0,
		ymode=log,
		ytick style={color=black},
		xlabel={\footnotesize{flips}},
		legend pos=south east
	]
	\addplot [
		semithick, color=mycolor1, line width=1.2pt, mark options={solid}
	] table {tikz/tables/factor_empirical_cdf.csv};
	\addlegendentry{\footnotesize{empirical}}
	\addplot [
		color=black, dashed, black, line width=0.75pt
	] table {tikz/tables/logn_fit_cdf_geom.csv};
	\addlegendentry{\footnotesize{fitted}}
	\end{axis}
\end{tikzpicture}
	\end{subfigure}%
	\hfill%
	\begin{subfigure}[t]{.465\linewidth}
		\centering
		\definecolor{mycolor1}{rgb}{0.00000,0.44700,0.74100}%
\definecolor{mycolor2}{rgb}{0.85000,0.32500,0.09800}%

\begin{tikzpicture}

	\begin{axis}[
	width=5.0in*0.6,
	height=3.6in*0.6,
		tick label style={font=\scriptsize},
		log basis x={10},
		log basis y={10},
		tick align=inside,
		tick pos=left,
		x grid style={white!69.02!black},
		xmajorgrids,
		xmin=6000, xmax=15000,
		xmode=log,
		xtick style={color=black},
		y grid style={white!69.02!black},
		ymajorgrids,
		ymin=0.0001, ymax=1.0,
		ymode=log,
		ytick style={color=black},
		xlabel={\footnotesize{flips}},
		yticklabels={,,},
		legend pos=south east
	]
	\addplot [
		semithick, color=mycolor1, line width=1.2pt, mark options={solid}
	] table {tikz/probsat/empirical_cdf.csv};
	\addlegendentry{\footnotesize{empirical}}
	\addplot [
		color=black, dashed, black, line width=0.75pt
	] table {tikz/probsat/logn_fit_cdf_geom.csv};
	\addlegendentry{\footnotesize{fitted}}
	\end{axis}
\end{tikzpicture}
	\end{subfigure}%
	\caption{%
		Logarithmically scaled ecdf and fitted cdf of instances $A$ (left) and $B$ (right).
	}
	\label{fig:log_cdf-log_ecdf}
\end{figure}

The two instances~$A$ and~$B$ are being examined in this manner in \refFig{fig:log_cdf-log_ecdf}. As can be observed, the lognormal fit accurately predicts the probabilities associated with very short runs. For the other instances, lognormal distributions were mostly also able to accurately describe the probabilities for short runs. However, the behavior of the ecdf and the fitted lognormal distribution differed very slightly in a few instances. 

Lastly, the probabilities for particularly hard instances should also be checked.  Any mistakes in this area could lead to underestimating the likelihood of encountering an exceptionally hard instance. For analyses of this type, the survival function $S$ is a useful tool; if $F$ is the cdf, $S(x):=1-F(x)$. Therefore, the survival function's value $S(x)$ represents the probability that an instance is (on average) harder than $x$ in our case. If we plot the empirical survival function, \ie $\hat{S}_n(x) := 1-\hat{F}_n(x)$, and the fitted survival function together on a graph with logarithmically scaled axes, we can easily detect errors in the right tail.

\begin{figure}[htb]
	\centering
	\begin{subfigure}[t]{.465\linewidth}
		\centering
		\definecolor{mycolor1}{rgb}{0.00000,0.44700,0.74100}%
\definecolor{mycolor2}{rgb}{0.85000,0.32500,0.09800}%

\begin{tikzpicture}

	\begin{axis}[
	width=5.0in*0.6,
	height=3.6in*0.6,
		tick label style={font=\scriptsize},
		log basis x={10},
		log basis y={10},
		tick align=inside,
		tick pos=left,
		x grid style={white!69.02!black},
		xmajorgrids,
		xmin=10000000, xmax=1000000000,
		xmode=log,
		xtick style={color=black},
		y grid style={white!69.02!black},
		ymajorgrids,
		ymin=0.0001, ymax=1.0,
		ymode=log,
		ytick style={color=black},
		xlabel={\footnotesize{flips}},
		legend pos=south west,
	]
	\addplot [
		semithick, color=mycolor1, line width=1.2pt, mark options={solid}
	] table {tikz/tables/factor_empirical_survival.csv};
	\addlegendentry{\footnotesize{empirical}}
	\addplot [
		color=black, dashed, line width=0.75pt
	] table {tikz/tables/logn_fit_sf_geom.csv};
	\addlegendentry{\footnotesize{fitted}}
	\end{axis}
\end{tikzpicture}
	\end{subfigure}%
	\hfill%
	\begin{subfigure}[t]{.465\linewidth}
		\centering
		\definecolor{mycolor1}{rgb}{0.00000,0.44700,0.74100}%
\definecolor{mycolor2}{rgb}{0.85000,0.32500,0.09800}%

\begin{tikzpicture}

	\begin{axis}[
	width=5.0in*0.6,
	height=3.6in*0.6,
	tick label style={font=\scriptsize},
	log basis x={10},
	log basis y={10},
	tick align=inside,
	tick pos=left,
	x grid style={white!69.02!black},
	xmajorgrids,
	xmin=6000, xmax=15000,
	xmode=log,
	xtick style={color=black},
	y grid style={white!69.02!black},
	ymajorgrids,
	ymin=0.0001, ymax=1.0,
	ymode=log,
	ytick style={color=black},
	xlabel={\footnotesize{flips}},
	yticklabels={,,},
	legend pos=south west
	]
	\addplot [
		semithick, color=mycolor1, line width=1.2pt, mark options={solid}
	] table {tikz/probsat/empirical_sf.csv};
	\addlegendentry{\footnotesize{empirical}}
	\addplot [
		color=black, dashed, line width=0.75pt
	] table {tikz/probsat/logn_fit_sf_geom.csv};
	\addlegendentry{\footnotesize{fitted}}
	\end{axis}
\end{tikzpicture}
	\end{subfigure}%
	\caption{%
		Logarithmically scaled empirical survial function and fitted survival function of instances~$A$ (left) and $B$ (right).
	}
	\label{fig:sf-esf}
\end{figure}
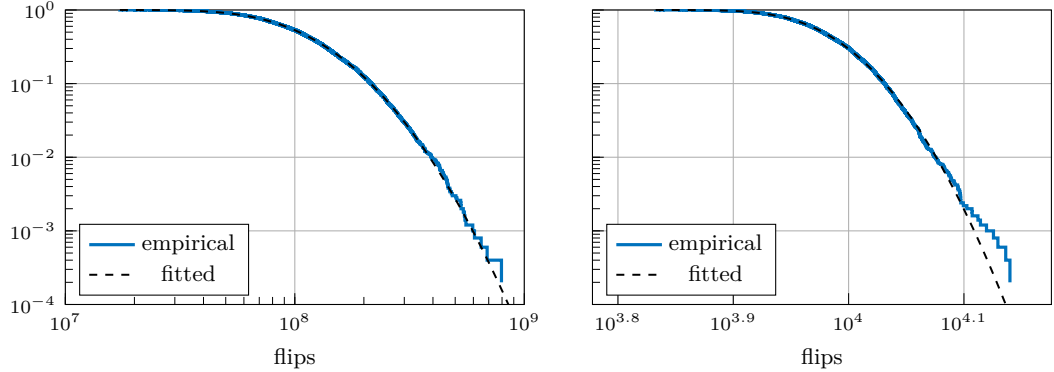

\refFig{fig:sf-esf} illustrates this type of plot for the instances~$A$ and~$B$. Here, there is a discernible deviation between $A$ and $B$. While for $A$, the lognormal fit provides an accurate description of the probabilities for long runs, in the case of $B$, the empirical survival function seems to approach $0$ somewhat slower than the lognormal estimate. In the vast majority of cases, these extreme value probabilities are accurately reflected by the lognormal fit. In most other cases, the empirical survival function approaches 0 more slowly than the lognormal fit. Thus, in these cases, the likelihood of encountering an exceptionally hard instance is underestimated.

So far, we discussed the behavior of lognormal fits based on this visual inspection. Altogether, we concluded that lognormal distributions seem to be well suited for describing the data. Next, we shall concretize this through a statistical test. To be more precise, we apply the $\chi^2$-test as a goodness-of-fit test for each base instance.
For each such instance, the fitted lognormal distribution used the $5000$ data points, and afterwards, the $\chi^2$-test statistic is computed. Subsequently, the probability that such a value of the test statistic occurs under the assumption of the so-called null hypothesis is determined. We will refer to this probability as the $p$-value. In our case, the null hypothesis is the assumption that the data follow a lognormal distribution. If the fit is poor, then a small $p$-value will occur. If $p$ is sufficiently small, the null hypothesis is rejected. We reject the null hypothesis if \mbox{$p<0.05$}. 

Two more remarks are due on this matter. First, from a high $p$-value, one cannot 
prove
that the assumption that the data are lognormally distributed is correct.
However, we use a sufficiently high $p$-value as a heuristic whether this assumption is reasonable.

Secondly, there is an obstacle that complicates statistical analysis by this method. As described,
each of the $5000$~data points is obtained by first sampling $100$~runtimes of the corresponding instance and then calculating the mean. 
This means that we do not work with the actual expected values but only estimates. In other words, this implies that our data is noisy. The greater the variance in the respective instance, the greater the corresponding noise. If one were to apply the $\chi^2$-test to this noisy data, some cases would be incorrectly rejected, especially if the variance is large. To overcome this limitation, we additionally use a bootstrap-test, which is based on Cheng~\cite{cheng2017non}. This test is presented %
in \refAlg{algo:bootstrap}.

\begin{algorithm}[htb]
	\textbf{Input:} (noisy) random sample $\mathbf{y}=(y_1, y_2, \dots, y_n)$, integer $N$, significance $\alpha\in (0,1)$\\
	\BlankLine
	
	$\hat{\theta}\leftarrow \textnormal{MLE}(\mathbf{y}, F)$, lognormal maximum likelihood estimation, $F$ is the lognormal cdf\\
	$X^2 \leftarrow \textnormal{ChiSquare}(\mathbf{y}, \hat{\theta})$, Chi-squared goodness of fit test statistic\\
	
	\SetKw{KwWithProb}{with probability}
	\SetKw{KwDo}{do}
	
	\SetKwFunction{Shuffle}{Shuffle}
	
	\For{$j=1$ to $N$}{
		$\mathbf{y}' \leftarrow (y'_1, \dots, y'_n)$, where all $y'_i$ are i.\,i.\,d.\ samples from the fitted lognormal\\
		 \phantom{$\mathbf{y}' \leftarrow (y'_1, \dots, y'_n)$,} distribution with parameters $\hat{\theta}$\\
		$\mathbf{y}' \leftarrow \mathbf{y}' + \textnormal{noise}$, where noise is sampled from an $n$-dimensional normal distribution\\
		$\hat{\theta}'\leftarrow \textnormal{MLE}(\mathbf{y}', F)$\\
		$X^2_j \leftarrow \textnormal{ChiSquare}(\mathbf{y}', \hat{\theta}')$\\
	}
	Let $X^2_{(1)}\leq X^2_{(2)}\leq \dots \leq X^2_{(N)}$ be the sorted test statistics.\\
	\lIf{$X^2_{( \lfloor (1-\alpha)\cdot N \rfloor )}<X^2$}{reject \textbf{else} accept}

	\caption{Bootstrap-test for noisy data}
	\label{algo:bootstrap}
\end{algorithm}

Briefly summarized, this test simulates how our data points were generated, assuming the null hypothesis. For this purpose, particular attention should be paid to how the test sample is rendered noisy. Owing to the central limit theorem, it is reasonable to assume that the initial data's sample mean originates from a normal distribution around the true expected value.  We use this assumption in the bootstrap-test using a noise signal drawn from a normal distribution with expected value~$0$. The variance of this normal distribution is determined from the initial data and divided by $100$ (cf.\ central limit theorem). %

If there is a large difference between the respective $p$-values of the $\chi^2$- and bootstrap-tests, this suggests that the variance in the initial data is too high and that the number of samples used to calculate the sample mean should be increased. However, this case has only occurred twice, and we explicitly indicate it later. In all other cases, the $p$-values of the $\chi^2$- and the bootstrap-tests were similar.
To illustrate this point, let us consider the $p$-values of the two instances~$A$ and~$B$. The excellent representation of the runtime behavior over the entire support on instance~$A$ is also reflected in the two $p$-values. Using the $\chi^2$-test, one obtains a $p$-value of approximately $0.783$, and using the bootstrap-test, one obtains a $p$-value of $0.76$. Since these two $p$-values are both above $0.05$, we conclude that the assumption that lognormal distributions describe the runtimes is reasonable.

In contrast, for instance $B$, we observed that while a lognormal distribution accurately describes the main part %
of the distribution,
the probabilities for extremely long runs are inadequately represented. This observation is again reflected in the respective $p$-values. The $\chi^2$-test yields a $p$-value of 
$\approx 0.008$, and the bootstrap-test yields a $p$-value of $0.013$. Since both $p$-values are
below 
$0.05$, the assumption that these data originate from a lognormal distribution is rejected.

Overall, this example is intended to demonstrate that these two statistical tests can show an inadequate fit, even if the problems only arise for extreme values. Second, it should demonstrate that the $p$-values of the two tests are generally similar; if the $p$-values differ significantly, then more samples should be used to calculate the sample mean.

We now proceed by considering the adequacy of lognormal distributions for describing \SRWA{} runtimes.  The results of the statistical analysis are reported in \refTab{tab:stat} and can be found in~\cite{AllData}.
\begin{table}[htb]
	\begin{tabular}{l| c c c c c || c}
		& hidden & different chances & uniform & factoring & coloring & total\\ \hline
		rejected & 0 & 2 & 1 & 2 & 0 & 5\\
		$\#$ of instances & 20 & 120 & 25 & 33 & 32 & 230
	\end{tabular}
	\caption{Statistical goodness-of-fit results for \Alfa{}+\SRWA{} runtimes over various problem domains. The \emph{rejected} row contains the number of instances where the lognormal distribution is not a good fit according to the $\chi^2$-test at a significance level of $0.05$. To put these results into perspective, the second row contains the total number of instances of each domain. Out o a total of 230 instances, 5 got rejected.}
	\label{tab:stat}
\end{table}

The first line in the table represents for how many instances the statistical tests rejected the lognormal distribution hypothesis. The second line indicates how many instances have been checked in total. 
It should be noted that the same number of instances was rejected by the $\chi^2$-test as was by the bootstrap-test.
Thus,
there is no need to distinguish between the tests here. It should also be mentioned that in statistical tests, there is always a possibility that a hypothesis will be rejected even though the null hypothesis holds (type 1 error).
At a significance level of \mbox{$0.05$}, this probability is at the most $5\%$.
Accordingly, the total of $5$~rejected instances may be attributed to 
so-called
\mbox{type~1 errors}. This statement is also supported by the fact that no exceptionally low $p$-value was observed, \ie no $p$-value that is unusual for a total of $230$~samples.

For \probSAT{}, on the other hand, the situation appears to be different. The results are summarized in \refTab{tab:stat_probSAT} and~\cite{AllData}.
\begin{table}[htb]
	\begin{tabular}{l| c c c c c c || c}
		number of variables & 50 & 100 & 150 & 200 & 300 & 800 & total \\\hline
		rejected & 2 & 2 & 1 & 0 & 2 & 0 & 7\\
		$\#$ of instances & 10 & 10 & 10 & 10 & 10 & 5 & 55
	\end{tabular}
	\caption{Goodness-of-fit results for \Alfa{}+\probSAT{} over various hidden solution instance sizes.
	The \emph{rejected} row contains the number of instances where the lognormal distribution is not a good fit according to the $\chi^2$-test at a significance level of $0.05$. To put these results into perspective, the second row contains the total number of instances of each instance size.}
	\label{tab:stat_probSAT}
\end{table}
The columns refer to the number of variables in the corresponding SAT instances.
The number of rejected instances is again identical regardless of whether the $\chi^2$- or bootstrap-test is applied. %
As can be seen, the lognormal distribution hypothesis was rejected for $7$ of the $55$~instances.
This number can no longer be accounted for by type 1 errors at a significance level of~$0.05$.
However, one can observe that the majority of rejected instances occur for a small number of variables.
If one were to consider only the instances from $150$~variables onwards, then the remaining rejected instances may be attributed to type 1 errors.
This raises the suspicion that there may be a limiting process, \ie that the lognormal distribution hypothesis is only valid for $n\rightarrow \infty$.

Lastly, a difference between the two static tests emerges for \YAL{}. According to the $\chi^2$-test, $2$ of the total $10$~instances are rejected. However, using the bootstrap-test, the lognormal distribution hypothesis is not rejected for any instance. Therefore, one cannot rule out the possibility that lognormal distributions are the natural model to describe the instances, but more experiments are required to make a more precise statement.

In summary, the presumption that lognormal distributions are the appropriate choice for describing runtimes has been reinforced for \SRWA{}. For \probSAT{}, this appears plausible at least above a certain instance size. Likewise, the choice of lognormal distributions also seems reasonable for \YAL{}. These observations lead us to the following conjecture.

\begin{conjecture}[Strong Conjecture]
	\label{conj:strong}
	The runtime of $\Alfa$ with $\algoformat{SLS} \in \set{\texttt{SRWA}, \texttt{probSAT}, \texttt{YalSAT}}$ follows a lognormal distribution.
\end{conjecture}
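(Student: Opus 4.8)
Since nothing resembling an exact combinatorial identity can hold for an arbitrary satisfiable base instance~$F$, the plan is to read \refCon{conj:strong} as a \emph{limiting} statement and to derive it from a multiplicative central limit theorem. Fix~$F$ and write $N := \bigl| \operatorname{Res}_w^{\ast}(F) \setminus F \bigr|$ for the size of the resolvent pool from which \refAlg{algo:res} samples, and let $H = H(F^{(1)})$ denote the hardness (expected flip count) of $F^{(1)} = F \cup \Clauses$ for the SLS solver in question (\SRWA{}, \probSAT{}, or \YalSAT{}). The heuristic underpinning the whole argument is that each resolvent $R \in \operatorname{Res}_w^{\ast}(F) \setminus F$ that is actually added perturbs the objective landscape the random walk moves over by a small \emph{multiplicative} amount on the expected hitting time; after fixing an ordering $R_1, \dots, R_N$ of the pool and coupling the walks on the nested formulas $F \subseteq F \cup \set{R_1} \subseteq F \cup \set{R_1, R_2} \subseteq \dots$, this should yield a telescoping decomposition
\begin{align*}
	\log\bigl(H - \gamma\bigr) \;=\; \mu_0 + \sum_{i=1}^{N} Z_i,
\end{align*}
where $Z_i = 0$ whenever $R_i \notin \Clauses$ (an event of probability $1-p$, independently across~$i$ by \refAlg{algo:res}), $\gamma$ is a deterministic lower bound on the shortest possible run, and $\mu_0$ is the base hardness. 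Showing that $\log(H-\gamma)$ is asymptotically $\mathcal{N}(\mu,\sigma^2)$ is then precisely the assertion that $H$ is three-parameter lognormal.

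The steps, in order, would be: \textbf{(1)} exhibit natural instance families along which $N \to \infty$ --- for the width-$w$ closure this needs only unboundedly many variables, which holds for all five experimental domains; \textbf{(2)} make the coupling precise, i.e.\ run the SLS solver on all $N+1$ nested formulas with a shared source of randomness and bound the per-step change of the flip probabilities induced by one extra clause in terms of its \emph{make}/\emph{break} scores, obtaining $|Z_i| \le c_i$ with $\sum_{i} c_i^2$ under control; \textbf{(3)} verify a Lyapunov or Lindeberg condition for the array $(Z_i)$ together with a weak-dependence estimate, so that a martingale or $m$-dependent central limit theorem applies to $\sum_{i} Z_i$; \textbf{(4)} identify $\gamma$ with a lower bound on the minimal hitting time --- for \SRWA{}, the Hamming distance of $\alpha_0$ to the nearest satisfying assignment --- and conclude; and \textbf{(5)} track the dependence on the number of variables~$n$, so that the statement is honestly one about $n \to \infty$, matching the limiting-process suspicion raised by the small-$n$ \probSAT{} rejections in \refTab{tab:stat_probSAT}.

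The main obstacle is step~\textbf{(3)}, and it is exactly why the statement is posed as a conjecture rather than a theorem. The resolvents in $\operatorname{Res}_w^{\ast}(F)$ are logically entailed by~$F$ and by one another, so the increments $Z_i$ are strongly correlated; worse, as the paper itself notes, an added clause can \emph{harm} SLS performance, so the per-clause effect is not uniformly signed, and its magnitude is governed by how the clause reshapes plateaus and local minima of the SLS objective --- a quantity we cannot control in general, since even the unmodified hitting time of \SRWA{} on a structured satisfiable formula resists sharp analysis. Absent such control, I would fall back to the weaker target of \refCon{conj:weak}, i.e.\ the long-tail property $\limsup_{x \to \infty} \prob{H > x + y} / \prob{H > x} = 1$ for every fixed $y > 0$. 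This requires only a \emph{lower} bound on the upper tail, which seems within reach: for each large threshold, construct a sub-family of resolvent sets of non-negligible probability that enlarges a plateau of the SLS landscape, thereby stretching the hardness while shrinking its relative spacing. Since that already feeds into the restart theorem for long-tailed algorithms, it is the route I would pursue first.
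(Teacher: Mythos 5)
The statement you are asked to prove is a \emph{conjecture}: the paper offers no proof of it, and indeed explicitly lists a theoretical proof of \refCon{conj:strong} as an open problem in its conclusion. What the paper does supply is empirical support --- maximum-likelihood fits of the three-parameter lognormal to 5000-sample hardness ecdfs per base instance, visual inspection of both tails, and $\chi^2$- plus bootstrap goodness-of-fit tests (\refTab{tab:stat}, \refTab{tab:stat_probSAT}) --- together with exactly the informal mechanism you propose: the paper hypothesizes that ``each added clause exerts a small multiplicative effect on the instance's hardness,'' so that the lognormal arises as the multiplicative analogue of the CLT. Your telescoping decomposition of $\log(H-\gamma)$ over the nested formulas, your identification of $\gamma$ with an irreducible intrinsic hardness, and your reading of the small-$n$ \probSAT{} rejections as evidence of a limiting process all match the paper's own narrative. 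Your honest verdict that step (3) --- independence/weak-dependence of the per-clause increments --- cannot currently be controlled, and that this is precisely why the statement is a conjecture, is the correct assessment; there is no proof in the paper for you to have missed.

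Two cautions on your fallback to \refCon{conj:weak}. First, you state the long-tail property with a $\limsup$; the definition used in the paper (\refDef{def:long_tail}) requires the full limit of $\prob{H > x+y}/\prob{H > x}$ to equal $1$, and since that ratio is always at most $1$, the substantive content is a \emph{liminf} bound: your proposed construction of hardness-stretching resolvent sub-families would have to work for \emph{all} sufficiently large thresholds, not merely along a subsequence, or it establishes nothing. Second, note that the hardness $H$ here is an expected flip count over a random modification, not a runtime of a single Las Vegas run, so any plateau-enlargement argument must translate a geometric statement about one modified formula into a tail bound on the distribution over $\Clauses$; the paper makes no attempt at this either, and it is a genuine additional step your sketch elides.
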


If this statement is true, then it would be intriguing in that one can infer how modifying the base instance affects the hardness of instances.
This effect is likely to be the result of generating models for lognormal distributions.
Just as the normal distribution is a natural model for the sum of \iid random variables, the lognormal distribution is a natural model for the multiplication of \iid random variables. Thus, one can hypothesize that each added clause exerts a small multiplicative effect on the instance's hardness. 

Simultaneously, the three parameters of the lognormal distribution also provide insight into how the hardness of the instance changes. For example, the location parameter $\gamma$ implies an inherent problem hardness that cannot be decreased regardless of the added clauses' choice. 
At the same time, $\gamma$ also serves as a numerical description for the value of this intrinsic hardness.
Using Bayesian statistics, it is possible to infer the parameters while the solver is running.
These estimated parameters can, for example, be used to schedule restarts.
This would lead to a scenario similar to that discussed in~\cite{RHK02RestartPolicies}.

Conjecture~\ref{conj:strong} is a strong statement.
However, %
a small deviation of the probabilities, for example, at the left tail, would render the strong conjecture invalid from a 
strict
mathematical point of view. Particularly, 
visual analyses revealed that the left tail's behavior, \ie for extremely short runs, is occasionally not accurately reflected by lognormal distributions. Conversely, the right tail, \ie the probabilities for particularly long runs, are usually either correctly represented by lognormal distributions or, occasionally, the corresponding probability approaches $0$~even more slowly. We, therefore, rephrase our conjecture in a %
weakened form. Our observations fit a class of distributions known as long-tail distributions defined purely in terms of their behavior at the right tail.

\begin{definition}[\cite{foss2011introduction}]
	\label{def:long_tail}
	A positive, real-valued random variable $X$ is \introduceterm{long-tailed}, if and only if
	\begin{align*}
		\forall x\in \Rpos:\, \prob{X>x} > 0
		\quad \quad \textnormal{and} \quad \quad
		\forall y\in \Rpos: \lim\limits_{x\rightarrow \infty} \frac{\prob{X > x+y}}{\prob{X>x}} = 1.
	\end{align*}
\end{definition}
\begin{conjecture}[Weak Conjecture]
	\label{conj:weak}
	The runtime of \Alfa{} with $\algoformat{SLS} \in \set{\texttt{SRWA}, \texttt{probSAT}, \texttt{YalSAT}}$ follows a long-tailed distribution.
\end{conjecture}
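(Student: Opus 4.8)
The plan is to treat the Weak Conjecture as what it is: a deliberately weakened, more robust restatement of the Strong Conjecture~\refCon{conj:strong}. Accordingly, I would split the task into one purely mathematical lemma and one transfer-of-evidence argument. The lemma is: \emph{every three-parameter lognormal random variable is long-tailed in the sense of~\refDef{def:long_tail}}. Granting it, the statistical evidence gathered above in favour of the Strong Conjecture becomes, a fortiori, evidence for the Weak Conjecture; and --- this is the point of passing to the weaker form --- the Weak Conjecture is immune to precisely the deviations actually observed in the data. Long-tailedness constrains only the \emph{right} tail, so the occasional mismatch in the left tail (very short runs) is irrelevant; and in those instances where the empirical survival function was seen to decay more slowly than the fitted lognormal, the distribution is, if anything, ``even more'' long-tailed, so the conclusion is only reinforced.

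So the substance is the lemma. First I would record that long-tailedness is shift-invariant: if $X$ is long-tailed and $c\in\R$, then $X+c$ is long-tailed, since $\prob{X+c>x+y}/\prob{X+c>x}=\prob{X>(x-c)+y}/\prob{X>x-c}\to 1$. Hence the location parameter $\gamma$ is harmless and it suffices to handle the two-parameter case $X=\e^{\mu+\sigma Z}$ with $Z$ standard normal. Writing $\bar\Phi$ and $\varphi$ for the standard normal survival function and density, one has $\prob{X>x}=\bar\Phi\bigl((\log x-\mu)/\sigma\bigr)>0$ for every $x>0$, which is the first clause of~\refDef{def:long_tail}. For the second, fix $y\in\Rpos$, set $a_x:=(\log x-\mu)/\sigma$ and $b_x:=(\log(x+y)-\mu)/\sigma$, and note $a_x\to\infty$, $b_x-a_x=\tfrac{1}{\sigma}\log(1+y/x)=\Theta(1/x)$ while $b_x+a_x=\Theta(\log x)$, so $(b_x-a_x)(b_x+a_x)\to 0$ and $a_x/b_x\to 1$. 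Combining with the Mills-ratio asymptotic $\bar\Phi(t)\sim\varphi(t)/t$ as $t\to\infty$ gives
\[
\frac{\prob{X>x+y}}{\prob{X>x}}=\frac{\bar\Phi(b_x)}{\bar\Phi(a_x)}\sim\frac{a_x}{b_x}\exp\!\Bigl(-\tfrac12(b_x-a_x)(b_x+a_x)\Bigr)\longrightarrow 1
\]
as $x\to\infty$, as wanted. (A slicker alternative is to quote the classical fact, catalogued in~\cite{foss2011introduction}, that the lognormal law is subexponential and that subexponentiality implies long-tailedness.)

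The main obstacle is not mathematical difficulty but intellectual honesty about the logical status of the result. The reduction above only propagates \emph{empirical} support: nothing in it rigorously establishes that the hardness of \Alfa{} with $\algoformat{SLS}\in\set{\SRWA,\probSAT,\YalSAT}$ really is long-tailed, only that the evidence amassed for the stronger, more fragile lognormal hypothesis also speaks for the weaker, more stable long-tail hypothesis. What the lemma genuinely contributes is robustness of the claim --- shift-invariance and dependence on the right tail alone --- and it is exactly this robustly formulated hypothesis that is needed as input to the subsequent theorem asserting that restarts are useful for long-tailed algorithms.
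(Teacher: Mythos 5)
Your proposal matches the paper's own treatment: the Weak Conjecture is not proven but argued for by noting that the empirical evidence for the lognormal (Strong) Conjecture carries over, since lognormal distributions are long-tailed, and that the observed deviations (left-tail mismatches, and right tails decaying more slowly than the lognormal fit) leave the long-tail claim intact or even reinforce it. The only difference is that you supply an explicit Mills-ratio proof of the lemma that (three-parameter) lognormals are long-tailed, which the paper simply cites from~\cite{foss2011introduction,nair2020fundamentals}; your argument for that lemma is correct.
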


It should be noted that lognormal distributions have the long-tail property~\cite{foss2011introduction,nair2020fundamentals}. That is, if the Strong Conjecture holds, the Weak Conjecture is implied. The reverse is, however, not true. In the next section, we show an important consequence in case the Weak Conjecture holds.

\section{Restarts Are Useful For Long-Tailed Distributions}

If the Strong Conjecture holds, \ie if the runtimes are lognormally distributed, then restarts are useful~\cite{Lorenz18RuntimeDistributions}.
This section extends this result and mathematically proves that restarts are useful even if only the Weak Conjecture holds.
This will be achieved by showing that restarts are useful for long-tailed distributions.

A condition 
for the usefulness of restarts, as defined in %
Definition~\ref{def:RestartsUseful}, was 
proven
in~\cite{Lorenz18RuntimeDistributions}. %
We will show the result using this theorem that is restated below.

\begin{theorem}[\cite{Lorenz18RuntimeDistributions}]
	\label{theo:sufficient}
	Let $X$ be a positive, real-valued random variable having quantile function~$Q$, then restarts are useful if and only if there is a quantile $p \in (0,1)$ such that
	\begin{equation*}
		R(p,X) :=
		(1-p)\cdot \frac{Q(p)}{\expectation{X}}+\frac{\integralLow[u]{0}{p}{Q(u)}}{\expectation{X}}<p.
	\end{equation*}
\end{theorem}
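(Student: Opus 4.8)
The plan is to convert \refDef{def:RestartsUseful} into a usable inequality by first deriving a closed form for $\expectation{X_t}$ and then translating $\expectation{X_t}<\expectation{X}$ from a statement about restart times~$t$ into one about quantile levels~$p$. First I would compute $\expectation{X_t}$ by conditioning on whether the initial run of~$\mathcal{A}$ already succeeds within time~$t$. This event has probability $F(t)=\prob{X\le t}$, and upon failure an independent fresh copy of~$\mathcal{A}$ is launched, so
\begin{equation*}
	\expectation{X_t} = F(t)\cdot\expectation{X \mid X\le t} + \bigl(1-F(t)\bigr)\bigl(t+\expectation{X_t}\bigr).
\end{equation*}
Solving for $\expectation{X_t}$ yields $\expectation{X_t}=\expectation{\min(X,t)}/F(t)$; in particular $F(t)>0$ is forced (otherwise $\mathcal{A}_t$ never terminates) and $\expectation{X_t}\le t/F(t)<\infty$. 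Hence restarts are useful if and only if there is a $t>0$ with $\expectation{\min(X,t)}<F(t)\cdot\expectation{X}$, and at any such~$t$ one necessarily has $F(t)<1$, since $F(t)=1$ would force equality.

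Second, I would rewrite both sides via the quantile function, using the representation $X\overset{\mathrm{d}}{=}Q(U)$ with $U$ uniform on $(0,1)$ together with the Galois-type identities $Q(p)\le t \Leftrightarrow p\le F(t)$, $\,Q(F(t))\le t$, and $F(Q(p))\ge p$ (all consequences of the monotonicity and right-continuity of~$F$). These give $\expectation{X}=\integralLow[u]{0}{1}{Q(u)}$ and, for every $t>0$, $\expectation{\min(X,t)}=\integralLow[u]{0}{F(t)}{Q(u)}+t\bigl(1-F(t)\bigr)$. For the ``only if'' direction, given such a~$t$, put $p:=F(t)\in(0,1)$; then $Q(p)=Q(F(t))\le t$, so
\[
	\integralLow[u]{0}{p}{Q(u)}+(1-p)\,Q(p) \;\le\; \integralLow[u]{0}{p}{Q(u)}+(1-p)\,t \;=\; \expectation{\min(X,t)} \;<\; p\cdot\expectation{X},
\]
which after dividing by $\expectation{X}$ is exactly $R(p,X)<p$. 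For the ``if'' direction, given $p\in(0,1)$ with $R(p,X)<p$, put $t:=Q(p)$ and $q:=F(t)\ge p$; since $Q$ is nondecreasing and $Q(q)=Q\bigl(F(Q(p))\bigr)\le Q(p)=t$, we have $Q(u)\le t$ for all $u\in[p,q]$, whence $\expectation{\min(X,t)}=\integralLow[u]{0}{q}{Q(u)}+t(1-q)\le\integralLow[u]{0}{p}{Q(u)}+(1-p)\,Q(p)<p\cdot\expectation{X}\le q\cdot\expectation{X}=F(t)\cdot\expectation{X}$, so $\expectation{X_t}<\expectation{X}$.

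I expect the main obstacle to be precisely this re-parametrization: when $F$ has jumps or flat segments there is no clean bijection between restart times and quantile levels, so the equivalence between the ``$\exists\,t$'' and ``$\exists\,p$'' forms must be pushed through the generalized-inverse identities above rather than by naively substituting $t=Q(p)$ and $p=F(t)$. A secondary point needing care is the bookkeeping of degenerate cases: the statement implicitly assumes $0<\expectation{X}<\infty$ (if $\expectation{X}=\infty$ then $R(p,X)=0<p$ and restarts are trivially useful whenever $F(t)>0$ for some~$t$, so the claimed equivalence still holds under the convention $c/\infty=0$), and the values $F(t)\in\{0,1\}$ must be excluded as above. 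As a cross-check, $\expectation{\min(X,t)}$ can alternatively be obtained as $\integralLow[u]{0}{t}{\bigl(1-F(u)\bigr)}$ by Fubini, which reproduces the same quantile formula.
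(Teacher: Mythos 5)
The paper does not prove this statement---it is imported verbatim from \cite{Lorenz18RuntimeDistributions}---so there is no in-paper proof to compare against. Your argument is correct and is the standard route: the renewal identity $\expectation{X_t}=\expectation{\min(X,t)}/F(t)$, the quantile representations $\expectation{X}=\integralLow[u]{0}{1}{Q(u)}$ and $\expectation{\min(X,t)}=\integralLow[u]{0}{F(t)}{Q(u)}+t\bigl(1-F(t)\bigr)$, and the passage between restart times~$t$ and quantile levels~$p$ via the generalized-inverse inequalities $Q(F(t))\le t$ and $F(Q(p))\ge p$. You also correctly isolate the only genuinely delicate points: ruling out $F(t)\in\{0,1\}$ for a witnessing~$t$, handling jumps and plateaus of~$F$ so that the two existential quantifiers match up, and the degenerate case $\expectation{X}=\infty$ (where $R(p,X)=0<p$ and restarts are trivially useful once $F(t)>0$ for some~$t$). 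I see no gap.
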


Even if the quantile function and the expected value are unknown, 
$R(p,X)$
can be characterized for large values of~$p$.

\begin{lemma}
	\label{lem:exp_infinite_restart}
	Consider a positive, real-valued random variable~$X$ with pdf~$f$ and quantile function~$Q$ such that~\mbox{$\expectation{X}<\infty$}.
	Also, assume that the limit $\lim_{t\rightarrow\infty} t^2 \cdot f(t)$ exists.
	Then,
	\begin{align*}
		\lim\limits_{p\rightarrow 1} R(p,X)
		= \lim\limits_{p\rightarrow 1} \, \left((1-p)\cdot \frac{Q(p)}{\expectation{X}}+\frac{\integralLow[u]{0}{p}{Q(u)}}{\expectation{X}}\right)
		= 1.
	\end{align*}
\end{lemma}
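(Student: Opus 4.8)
The plan is to split $R(p,X) = (1-p)\frac{Q(p)}{\expectation{X}} + \frac{1}{\expectation{X}}\integralLow[u]{0}{p}{Q(u)}$ into its two summands and treat them separately. The second summand is the easy one: since $X$ is positive with $\expectation{X} < \infty$, the layer-cake identity $\expectation{X} = \integralLow[t]{0}{\infty}{\prob{X > t}}$ together with a change of variables gives $\integralLow[u]{0}{1}{Q(u)} = \expectation{X}$, and because $Q \geq 0$ on $(0,1)$ the truncated integral $\integralLow[u]{0}{p}{Q(u)}$ increases to this value as $p \to 1$. Hence $\frac{1}{\expectation{X}}\integralLow[u]{0}{p}{Q(u)} \to 1$, and the whole lemma reduces to showing that $(1-p)\,Q(p) \to 0$ as $p \to 1$.

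For that, I would first observe that the hypothesised limit $c := \lim_{t\to\infty} t^2 f(t)$ is forced to be $0$: it is nonnegative since $t^2 f(t) \geq 0$, and if $c > 0$ then $f(t) \geq \frac{c}{2t^2}$ for all large $t$, which would make $\expectation{X} = \integralLow[t]{0}{\infty}{t f(t)} \geq \integralLow[t]{T}{\infty}{\frac{c}{2t}} = \infty$, contradicting $\expectation{X} < \infty$. From $t^2 f(t) \to 0$ I would then derive the tail estimate $t\,(1 - F(t)) \to 0$: given $\varepsilon > 0$, choose $T$ with $s^2 f(s) < \varepsilon$ for $s \geq T$, so that for $t \geq T$
\[
1 - F(t) = \integralLow[s]{t}{\infty}{f(s)} \leq \integralLow[s]{t}{\infty}{\frac{\varepsilon}{s^2}} = \frac{\varepsilon}{t},
\]
i.e.\ $t\,(1-F(t)) \leq \varepsilon$.

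It remains to transfer this bound to the quantile side. The quantile function $Q$ is non-decreasing, so as $p \to 1$ either $Q(p)$ stays bounded — in which case $(1-p)\,Q(p) \to 0$ trivially because $1-p \to 0$ — or $Q(p) \to \infty$. In the latter case, write $t = Q(p)$; from $Q(p) = \inf\setdescr{t \in \R}{F(t) \geq p}$ one gets $\prob{X < Q(p)} \leq p$, hence $1 - p \leq \prob{X \geq Q(p)} = 1 - F(Q(p))$, using $\prob{X = Q(p)} = 0$ since $X$ is absolutely continuous. Therefore $0 \leq (1-p)\,Q(p) \leq \bigl(1 - F(t)\bigr)\,t \to 0$ as $t \to \infty$, and dividing by $\expectation{X} > 0$ and adding back the limit $1$ of the second summand yields $\lim_{p \to 1} R(p,X) = 1$. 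I expect the only genuinely delicate points to be the bookkeeping around the quantile function — the inequality $1-p \leq 1-F(Q(p))$ and the harmless case split on whether $Q(p)$ stays bounded — together with the short but essential observation that the assumed limit must vanish; the tail estimate and the convergence of the second summand are routine.
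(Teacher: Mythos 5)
Your proof is correct, and for the key step it takes a genuinely more elementary route than the paper's. Both arguments share the same skeleton: the second summand tends to $1$ because $\integralLow[u]{0}{1}{Q(u)} = \expectation{X}$ (the paper gets this by the substitution $x = Q(u)$ followed by $p = F(t)$, you by the layer-cake identity plus monotone convergence --- either is fine), and everything reduces to showing $(1-p)\,Q(p) \to 0$, which in turn hinges on the observation you both make that $\lim_{t\to\infty} t^2 f(t)$ must equal $0$ whenever it exists and $\expectation{X} < \infty$. The difference lies in how that observation is transferred to the quantile side. The paper differentiates $Q$ via the inverse function theorem and applies L'Hospital's rule to identify $\lim_{p\to 1}(1-p)\,Q(p)$ with $\lim_{t\to\infty}t^2 f(t)$; this implicitly requires $F$ to be invertible with $f\bigl(Q(p)\bigr) > 0$ near $p=1$ and that the relevant quotients of derivatives have limits. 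You instead integrate the bound $f(s) \leq \varepsilon/s^2$ to obtain the tail estimate $t\,\bigl(1-F(t)\bigr) \leq \varepsilon$, and then use $F\bigl(Q(p)\bigr) \leq p$ (valid here since $F$ is continuous) together with the harmless case split on whether $Q(p)$ stays bounded to conclude $(1-p)\,Q(p) \leq Q(p)\,\bigl(1-F(Q(p))\bigr) \to 0$. This avoids L'Hospital and any differentiability or strict monotonicity assumptions on $F$ beyond the existence of the pdf, so your argument is, if anything, slightly more robust: it buys the same conclusion under nominally the same hypotheses with less analytic machinery.
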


\begin{proof}
	In the following, let~$F$ and~$f$ be the cdf and pdf of~$X$, respectively.
	We start by specifying the derivative of~$Q$ with respect to~$p$ as a preliminary consideration.
	From $F = Q^{-1}$ and the application of the inverse function theorem~\cite{rudin1964principles} it follows:
	\begin{align}
	\label{eq:quantil_deriv}
		Q'(p) \coloneqq \frac{\D}{\D p} Q(p)=\frac{1}{f\bigl(Q(p)\bigr)}. 
	\end{align}
	
	As the first step in our proof, we consider the limiting value of the second summand of~$R(p,X)$.
	This value can be determined by integration by substitution with~\mbox{$x=Q(u)$} followed by applying the change of variable method with~\mbox{$p=F(t)$}:
	\begin{align*}
		\lim\limits_{p\rightarrow 1}\frac{\integralLow[u]{0}{p}{Q(u)}}{\expectation{X}}
		=\lim\limits_{p\rightarrow 1}\frac{\integralLow{0}{Q(p)}{x\cdot f(x)}}{\expectation{X}}
		=\lim\limits_{t\rightarrow \infty} \frac{\integralLow{0}{t}{x\cdot f(x)}}{\expectation{X}} =1.
	\end{align*}
	The last equality 
	holds %
	because the 
	numerator
	matches the definition of the expected value.
	
	Next, we examine the limit
	of 
	$(1-p)Q(p)/\expectation{X}$.
	Since 
	$\lim_{p \to 1} (1-p) = 0$,
	the limit of \mbox{$(1-p)\cdot Q(p)/\expectation{X}$} needs to be examined 
	more closely.
	For this purpose, L'Hospital's rule is applied twice as well as the change of variable method with \mbox{$p=F(t)$} is used in the following:
	\begin{align*}
		\lim\limits_{p\rightarrow 1} (1-p)\cdot Q(p)
		= \lim\limits_{p\rightarrow 1} Q(p)^2\cdot f\bigl(Q(p)\bigr)
		= \lim\limits_{t\rightarrow \infty} t^2\cdot f(t).
	\end{align*}
	It is well-known that if \mbox{$\liminf_{t\rightarrow \infty} t^2 \cdot f(t) >0$} were to hold, then the expected value $\expectation{X}$ would be infinite (this statement is, for example, implicitly given in~\cite{foss2011introduction}).
	This would contradict the premise of the lemma; therefore, \mbox{$\liminf_{t\rightarrow \infty} t^2 \cdot f(t) = 0$}.
	Moreover, since, by assumption, $\lim_{t\rightarrow\infty} t^2 \cdot f(t)$ exists, we may conclude that 
	\[
	\lim\limits_{t\rightarrow\infty} t^2 \cdot f(t) = \limsup\limits_{t\rightarrow\infty} t^2 \cdot f(t) = \liminf\limits_{t\rightarrow\infty} t^2 \cdot f(t) = 0. \qedhere
	\]
\end{proof}

A frequently used tool for the description of distributions is the hazard rate function.

\begin{definition}[\cite{rausand2020system}]
	\label{def:hazard_rate}
	Let $X$ be a positive, real-valued random variable having cdf~$F$ and pdf~$f$.
	The \introduceterm{hazard rate function} $r \colon \Rpos \to \Rpos$ of $X$ is given by %
		\begin{align*}
			r(t) := \frac{f(t)}{1-F(t)}.
		\end{align*}
\end{definition}

In particular, there is an interesting relationship between the long-tail property and the hazard rate function's behavior.

\begin{restatelemmaorig}[\cite{nair2020fundamentals}]
	\label{lem:restateorig}
	\label{lem:appendixlemma}
	Let $X$ be a positive, real-valued random variable with hazard rate function~$r$ such that the limit \mbox{$\lim\limits_{t\rightarrow\infty}r(t)$} exists.
	Then $X$ is long-tailed if and only if \mbox{$\LIM{t} r(t)=0$}.
\end{restatelemmaorig}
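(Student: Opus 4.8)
The plan is to recast both sides of the equivalence in terms of the survival function $S(t) := 1 - F(t) = \prob{X > t}$ via the cumulative hazard. Since $X$ has a pdf $f$, the cdf $F$ is absolutely continuous with $F' = f$ almost everywhere; moreover $S$ is continuous and, because $r(t) = f(t)/(1 - F(t))$ is by hypothesis a real number for every $t \in \Rpos$, strictly positive on all of $\Rpos$. Hence on every compact interval $[0, T]$ the map $t \mapsto \log S(t)$ is absolutely continuous with derivative $S'(t)/S(t) = -r(t)$, and the fundamental theorem of calculus gives $\log S(t) = \log S(0) - \integralLow[s]{0}{t}{r(s)}$. The constant $\log S(0)$ cancels when one forms a ratio, so for all $x \geq 0$ and $y > 0$,
\[
\frac{S(x + y)}{S(x)} = \exp\!\left(- \integralLow[s]{x}{x+y}{r(s)}\right).
\]
Because $S(x) > 0$ for every $x$, the first requirement of \refDef{def:long_tail} holds automatically, and long-tailedness is equivalent to the statement that, for each fixed $y > 0$, the sliding integral $\integralLow[s]{x}{x+y}{r(s)}$ tends to $0$ as $x \to \infty$.

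For the forward direction, assume $\LIM{t} r(t) = 0$. Given $y > 0$ and $\varepsilon > 0$, choose $T$ with $r(s) < \varepsilon / y$ for all $s > T$; then $\integralLow[s]{x}{x+y}{r(s)} < \varepsilon$ whenever $x > T$. Thus the sliding integral vanishes for every $y$, the displayed ratio tends to $1$, and $X$ is long-tailed.

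For the converse, suppose $X$ is long-tailed and set $L := \LIM{t} r(t)$, which exists by hypothesis and satisfies $L \geq 0$ since $r$ takes values in $\Rpos$. Specializing the displayed identity to $y = 1$ and using long-tailedness gives $\integralLow[s]{x}{x+1}{r(s)} \to 0$. On the other hand, $r(s) \to L$ forces the unit-length averages to converge to the same limit, $\integralLow[s]{x}{x+1}{r(s)} \to L$: for $x$ large, $|r(s) - L|$ is uniformly small on $[x, x+1]$, so the integral differs from $L$ by at most that bound. Equating the two limits yields $L = 0$, as claimed.

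All of the steps are short; the two places that deserve a sentence of care are the absolute-continuity argument underpinning the cumulative-hazard identity (needed because we are only handed a pdf, not a continuously differentiable survival function) and, in the converse direction, the observation that it is precisely the assumed existence of $\LIM{t} r(t)$ that lets one pass from \emph{the local integrals of $r$ vanish} to \emph{$r$ itself vanishes} --- without that hypothesis $r$ could oscillate indefinitely while keeping its sliding averages small, and one direction of the equivalence would break down. I expect this second point to be the only genuinely load-bearing subtlety.
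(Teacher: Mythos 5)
Your proof is correct and follows essentially the same route as the paper's: both rewrite the survival-function ratio as $\exp\bigl(-\int_x^{x+y} r(s)\,\mathrm{d}s\bigr)$ and reduce the equivalence to the vanishing of these sliding integrals, invoking the assumed existence of $\lim_{t\to\infty} r(t)$ to pass between that and $\lim_{t\to\infty} r(t)=0$. The only cosmetic difference is that you compute the limit of the unit-length averages directly where the paper argues by contraposition via $\liminf_{t\to\infty} r(t) > 0$; your added care about absolute continuity and the positivity of $S$ is a welcome refinement, not a departure.
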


\begin{proof}
	The appendix contains a proof of this lemma since the manuscript~\cite{nair2020fundamentals} was still unpublished at the time of writing.
\end{proof}

With the help of these preliminary considerations, we are now ready to show that restarts are useful for long-tailed distributions.

\begin{theorem}
	\label{theo:long_tail_restarts}
	Consider a positive, long-tailed random variable~$X$ with continuous pdf~$f$ and hazard rate function~$r$. 
	Also assume that either \mbox{$\expectation{X}=\infty$} holds or the limits \mbox{$\lim_{t\rightarrow\infty}r(t)$} and \mbox{$\lim_{t\rightarrow\infty} t^2 \cdot f(t)$} both exist.
	In both cases, restarts are useful for $X$.
\end{theorem}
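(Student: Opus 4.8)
The plan is to split into the two cases stated in the hypothesis and apply the characterization of Theorem~\ref{theo:sufficient} in each. First consider the case $\expectation{X} = \infty$. Here the strategy is to observe that $R(p, X)$ has $\expectation{X}$ in the denominator of both summands, so for any \emph{fixed} $p \in (0,1)$ the numerators $(1-p) Q(p)$ and $\integralLow[u]{0}{p}{Q(u)}$ are finite (the first trivially, the second because $Q$ is increasing and bounded on $(0,p]$ by $Q(p) < \infty$), hence $R(p,X) = 0$ when we interpret division by $\expectation{X} = \infty$ in the limiting sense. More carefully, one should argue that for any $p$ we can make the restarted expectation $\expectation{X_t}$ finite by choosing $t = Q(p)$: after each unsuccessful run of length $t$ the process restarts, so $\expectation{X_t} = t/\prob{X \le t} + \expectation{X \mid X \le t}$ or an analogous geometric-series expression, which is finite whenever $\prob{X \le t} > 0$ and $\expectation{X \mid X \le t} < \infty$ — both of which hold for $t = Q(p)$ with $p$ large enough that $\prob{X \le t} > 0$. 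Since $\expectation{X} = \infty > \expectation{X_t}$, restarts are trivially useful. (Equivalently: $R(p,X) < p$ holds for this $p$ since the left side is $0$.)

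Now consider the main case, where $\expectation{X} < \infty$ and both limits $\lim_{t \to \infty} r(t)$ and $\lim_{t \to \infty} t^2 f(t)$ exist. Here I would invoke Theorem~\ref{theo:sufficient}: it suffices to exhibit a single $p \in (0,1)$ with $R(p,X) < p$. The idea is to examine the behavior of $R(p,X) - p$ as $p \to 1$. By Lemma~\ref{lem:exp_infinite_restart} (whose hypotheses are exactly met, since $\expectation{X} < \infty$ and $\lim t^2 f(t)$ exists), we have $\lim_{p \to 1} R(p,X) = 1$, and of course $\lim_{p \to 1} p = 1$, so the leading term cancels and $R(p,X) - p \to 0$; this is not by itself enough, so I need to control the \emph{rate} of approach — i.e., show $R(p,X) - p < 0$ for $p$ close to $1$. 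The natural device is to differentiate: compute $\frac{\D}{\D p}\bigl(R(p,X) - p\bigr)$ using $Q'(p) = 1/f(Q(p))$ from \eqref{eq:quantil_deriv}. A short computation gives
\begin{align*}
\frac{\D}{\D p} R(p,X) = \frac{1}{\expectation{X}}\left( -Q(p) + (1-p) Q'(p) + Q(p) \right) = \frac{(1-p)}{\expectation{X} \, f(Q(p))},
\end{align*}
so that $\frac{\D}{\D p}\bigl(R(p,X) - p\bigr) = \frac{1-p}{\expectation{X} f(Q(p))} - 1$. Thus $R(p,X) - p$ is eventually decreasing in $p$ precisely when $\frac{1-p}{f(Q(p))} < \expectation{X}$ for $p$ near $1$, equivalently when $\frac{1 - F(t)}{f(t)} < \expectation{X}$ for $t$ large, i.e.\ when $r(t) > 1/\expectation{X}$ for large $t$ — but here is where the long-tail hypothesis enters through Lemma~\ref{lem:appendixlemma}: since $X$ is long-tailed and $\lim r(t)$ exists, that limit is $0$, so $r(t) \to 0$, which gives $\frac{1-p}{f(Q(p))} \to \infty$ as $p \to 1$. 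That means the derivative $\frac{\D}{\D p}(R(p,X) - p) = \frac{1-p}{\expectation{X} f(Q(p))} - 1$ is \emph{positive} for $p$ near $1$, i.e.\ $R(p,X) - p$ is \emph{increasing} toward its limit $0$ — hence $R(p,X) - p < 0$ for all $p$ sufficiently close to $1$. By Theorem~\ref{theo:sufficient}, restarts are useful.

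The main obstacle I anticipate is the rigor of the derivative computation and the limit manipulations: one must be careful that $Q$ is differentiable (which requires $f$ continuous and positive on the relevant range, guaranteed by the continuity assumption on $f$ together with the fact that $f(Q(p)) > 0$ for $p$ in a left-neighborhood of $1$ since otherwise $R$'s formula degenerates), that the integration-by-substitution and change-of-variable steps imported from Lemma~\ref{lem:exp_infinite_restart} are applied on the correct domains, and that the logic "$r(t) \to 0 \Rightarrow (1-p)/f(Q(p)) \to \infty$" is justified — this is just $r(t) = f(t)/(1-F(t))$ rewritten with $t = Q(p)$, $1 - F(t) = 1-p$. A secondary subtlety is handling the possibility that $\expectation{X} < \infty$ but $\lim_{t\to\infty} r(t)$ fails to be $0$ even though it exists — but Lemma~\ref{lem:appendixlemma} rules this out precisely because $X$ is assumed long-tailed, so this case is vacuous. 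The $\expectation{X}=\infty$ case is comparatively routine and only needs the elementary observation that truncating at any finite threshold with positive probability mass below it yields a finite restarted expectation.
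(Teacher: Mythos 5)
Your proposal is correct and follows essentially the same route as the paper's proof: the same case split on $\expectation{X}=\infty$ versus $\expectation{X}<\infty$, the same derivative computation $\frac{\D}{\D p}\bigl(R(p,X)-p\bigr)=\frac{1-p}{\expectation{X}\,f(Q(p))}-1$, and the same combination of Lemma~\ref{lem:exp_infinite_restart} (to get $R(p,X)-p\to 0$) with Lemma~\ref{lem:appendixlemma} (to get $r(t)\to 0$ and hence a positive derivative near $p=1$). The extra remarks on the finiteness of $\expectation{X_t}$ in the infinite-expectation case are a harmless embellishment of the paper's one-line argument.
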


\begin{proof}
	Let $F$ be the cdf and $Q$ the quantile function of $X$.
	We begin with the case 
	$\expectation{X} = \infty$.
	According to \refTheo{theo:sufficient}, restarts are useful if and only if
		\[
		(1-p)\cdot \frac{Q(p)}{\expectation{X}} + \frac{1}{\expectation{X}} \cdot \integralLow[u]{0}{p}{Q(u)} < p
		\]
	for some $p \in (0,1)$. However, if the expected value~$\expectation{X}$ is infinite, then the left side of this inequality is zero and the inequality is obviously satisfied. Hence, the statement follows.
	
	Secondly, we assume that $\expectation{X} < \infty$ and that
	both
	\mbox{$\Lim{t}r(t)$} and \mbox{$\Lim{t} t^2 \cdot f(t)$} exist. \refEq{eq:quantil_deriv} can now be used to calculate the following derivative:
	\begin{align*}
		\frac{\D}{\D p} \Big( R(p,X) - p \Big)
		= \frac{\D}{\D p} \left((1-p)\cdot \frac{Q(p)}{\expectation{X}}+\frac{\integralLow[u]{0}{p}{Q(u)}}{\expectation{X}}-p\right)
		= \frac{1-p}{\expectation{X}\cdot f\bigl(Q(p)\bigr)}-1.
	\end{align*}
	Consider the limit of this expression for $p\rightarrow 1$. Once again, the change of variable method is applied with $p=F(t)$, resulting in:
	\begin{align*}
		\lim\limits_{p\rightarrow 1} \frac{1-p}{\expectation{X}\cdot f\bigl(Q(p)\bigr)}-1 
		= \lim\limits_{t\rightarrow \infty} \frac{1-F(t)}{\expectation{X}\cdot f(t)}-1 
		= \lim\limits_{t\rightarrow \infty} \frac{1}{\expectation{X}\cdot r(t)}-1.
	\end{align*}
	By assumption, $X$ has a long-tail distribution and the limit of \mbox{$\Lim{t}r(t)$} exists. For this reason, \mbox{$\lim_{t\rightarrow \infty}r(t)=0$} follows as a result of \refLem{lem:appendixlemma}. Furthermore, since $\expectation{X} < \infty$ holds, we may conclude that  	
	\begin{align}
		\label{eq:long_tail_exp_infinity}
		\lim\limits_{p\rightarrow 1} \frac{1-p}{\expectation{X}\cdot f\bigl(Q(p)\bigr)}-1 
		= \lim\limits_{t\rightarrow \infty} \frac{1}{\expectation{X}\cdot r(t)}-1 = \infty.
	\end{align}
	The condition from \refTheo{theo:sufficient} can be rephrased in such a way that restarts are useful if and only if 
	$R(p,X) - p < 0$.
	According to \refLem{lem:exp_infinite_restart}, the left-hand side of this inequality approaches $0$ for $p\rightarrow 1$.
	However, as has been shown in \refEq{eq:long_tail_exp_infinity}, the derivative of 
	$R(p,X)-p$
	approaches infinity for $p\rightarrow 1$. These two observations imply that there is a $p\in (0,1)$ satisfying 
	$R(p,X) - p < 0$.
	Consequently, restarts are useful for~$X$.
\end{proof}

It should be noted that the conditions of this theorem are not restrictive since all naturally occurring long-tail distributions satisfy these conditions (see also~\cite{nair2020fundamentals}).

\begin{conjecture}[Corollary of the Weak Conjecture]
	Restarts are useful for \Alfa{} with $\textsf{SLS} \in \set{\texttt{SRWA}, \texttt{probSAT}, \texttt{YalSAT}}$.
\end{conjecture}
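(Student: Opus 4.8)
The plan is to obtain this statement as an immediate consequence of the Weak~Conjecture~\ref{conj:weak} together with \refTheo{theo:long_tail_restarts}. First I would let~$X$ denote the runtime random variable of \Alfa{} equipped with one of \SRWA{}, \probSAT{}, or \YalSAT{}. By the Weak~Conjecture, $X$ is positive and long-tailed, and — as for all the distributions we fit — it carries a continuous density~$f$ and hence a hazard rate~$r$. The whole task then reduces to verifying the hypotheses of \refTheo{theo:long_tail_restarts} for~$X$ and invoking that theorem.

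Next I would split on whether $\expectation{X}$ is finite. If $\expectation{X}=\infty$, the first case of \refTheo{theo:long_tail_restarts} applies verbatim and restarts are useful, with nothing more to check. If $\expectation{X}<\infty$, the theorem still concludes that restarts are useful provided the two limits $\lim_{t\rightarrow\infty} r(t)$ and $\lim_{t\rightarrow\infty} t^2 \cdot f(t)$ exist. So the entire content of the argument reduces to justifying the existence of these two limits.

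For that I would argue on two levels. Under the Strong~Conjecture~\ref{conj:strong}, $X$ is three-parameter lognormal; a short computation with the lognormal density shows that $f(t)$ decays faster than any negative power of~$t$, so $\lim_{t\rightarrow\infty} t^2 f(t)=0$, and the lognormal hazard rate tends to~$0$ as well, so both limits exist (and the conclusion then also follows directly from~\cite{Lorenz18RuntimeDistributions}). Since the Strong~Conjecture implies the Weak~Conjecture, this already settles the case we believe most strongly. More generally, I would appeal to the remark following \refTheo{theo:long_tail_restarts}: every naturally occurring long-tailed distribution (see~\cite{nair2020fundamentals}) is eventually regular enough for these limits to exist, and the hardness distributions observed here are of this kind.

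The main obstacle is precisely this last step: the Weak~Conjecture alone does not force the two limits to exist, so the statement is a conjecture rather than a theorem — it inherits exactly the mild regularity assumptions of \refTheo{theo:long_tail_restarts}. Once that regularity is granted — which the Strong~Conjecture supplies outright, and which the experiments support uniformly across all considered domains and solvers — the corollary is immediate.
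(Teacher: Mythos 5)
Your proposal is correct and follows essentially the same route as the paper, which derives the statement in one line from the Weak Conjecture~\ref{conj:weak} together with \refTheo{theo:long_tail_restarts}. Your additional care about the regularity hypotheses (existence of $\lim_{t\rightarrow\infty} r(t)$ and $\lim_{t\rightarrow\infty} t^2 \cdot f(t)$) is a point the paper only addresses informally in the remark that all naturally occurring long-tailed distributions satisfy these conditions, and your observation that this is why the statement remains a conjecture rather than a theorem is accurate.
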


If \refCon{conj:weak}  is true, then this statement follows immediately by \refTheo{theo:long_tail_restarts}.

\section{Conclusion}

We have provided compelling evidence that the runtime of \Alfa{} follows a long-tailed or lognormal distribution.
According to~\cite{SMP11RestartStrategies}, the usefulness of restarts is a necessary, however not 
a sufficient, condition to obtain super-linear speedups by 
parallelization. Since we have shown that the necessary 
condition is (presumably) satisfied, this immediately raises the 
question of whether super-linear speedups are obtained by parallelizing 
\Alfa{}-type algorithms.

We additionally want to pose the question whether some of the Conjectures~\ref{conj:strong} or~\ref{conj:weak} can be theoretically proven.
A first line of attack would be to analyze the special case of an %
solver like \SRWA{} whose runtime was already theoretically analyzed.

The technique of analyzing the runtime distribution of \Alfa{} could be further developed to help better understand the behavior of CDCL solvers.
These kind of solvers heavily employ the technique of adding new clauses and deleting some clauses.
This can be thought of as solving a new logically equivalent formula of the base instance.

Preliminary results on the solvers excluded for heuristic reasons seem to suggest that the \Alfa{}-method forces the runtime of the base solver to exhibit a multimodal behavior. Thus, the lognormal distribution is not a good fit in this case. However, an initial visual inspection of the data indicates an even heavier tail.

\bibliography{xy_refEvidenceLongTails}

\section{Proof of~\refLem{lem:restateorig}}

This section proves the connection between long-tailed distributed random variables and their hazard rate functions of \refLem{lem:restateorig}, restated next. This restatement is presented in a stronger form 
by providing
an additional equivalent statement.

\begin{restatelemma}	
	Let $X$ be a positive, real-valued random variable with hazard rate function~$r$ such that the limit \mbox{$\lim_{t\rightarrow\infty}r(t)$} exists.
	Then, the following three statements are equivalent:
	\begin{enumerate}
		\item $X$ is long-tailed. \label{proof:long_tail_1}
		\item $\LIM{x} \integral[t]{x}{x+y}{r(t)} = 0$, \quantorsep $\forall y > 0$.
		\label{proof:long_tail_2}
		\item $\LIM{t} r(t)=0$.
		\label{proof:long_tail_3}
	\end{enumerate}
\end{restatelemma}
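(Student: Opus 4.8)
The plan is to reduce both equivalences to the \emph{hazard representation} of the survival function. Write $\bar{F}(t) := 1 - F(t) = \prob{X > t}$. I would first note that the mere fact that $r = f/\bar{F}$ is a well-defined function on all of $\Rpos$ already forces $\bar{F}(t) > 0$ for every $t \ge 0$ (otherwise one would be dividing by zero), so the positivity clause of \refDef{def:long_tail} is automatic. Since $X$ has a density, $\bar{F}$ is absolutely continuous with $\bar{F}' = -f$ almost everywhere; because $1/\bar{F}$ is continuous and hence bounded on compact intervals while $f$ is integrable, $r$ is locally integrable, so $\frac{\D}{\D t}\log\bar{F}(t) = -r(t)$ a.e. Integrating from $0$, using $\bar{F}(0) = 1$ (as $X$ is positive), yields
\begin{align*}
	\bar{F}(t) = \exp\!\left( - \integralLow[s]{0}{t}{r(s)} \right), \qquad t \ge 0.
\end{align*}

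With this identity available, the equivalence \ref{proof:long_tail_1} $\Leftrightarrow$ \ref{proof:long_tail_2} is immediate, and, notably, requires no hypothesis on $\lim_{t\to\infty} r(t)$: for fixed $y > 0$,
\begin{align*}
	\frac{\prob{X > x+y}}{\prob{X > x}} = \frac{\bar{F}(x+y)}{\bar{F}(x)} = \exp\!\left( - \integral[t]{x}{x+y}{r(t)} \right),
\end{align*}
and by continuity of $\exp$ this ratio converges to $1$ as $x \to \infty$ precisely when $\integral[t]{x}{x+y}{r(t)} \to 0$. Quantifying over all $y > 0$ gives exactly the equivalence of \ref{proof:long_tail_1} and \ref{proof:long_tail_2}.

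For \ref{proof:long_tail_2} $\Leftrightarrow$ \ref{proof:long_tail_3} I would invoke the standing hypothesis that $L := \lim_{t\to\infty} r(t)$ exists (allowing $L = \infty$). The direction \ref{proof:long_tail_3} $\Rightarrow$ \ref{proof:long_tail_2} is the trivial bound $\integral[t]{x}{x+y}{r(t)} \le y \cdot \sup_{t \ge x} r(t) \to 0$. For the converse I would argue by contraposition: if $L > 0$ (or $L = \infty$), choose $x_0$ and $c > 0$ with $r(t) \ge c$ for all $t \ge x_0$; then $\integral[t]{x}{x+y}{r(t)} \ge c\,y > 0$ for every $x \ge x_0$, so the limit in \ref{proof:long_tail_2} cannot be $0$. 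Hence \ref{proof:long_tail_2} forces $L = 0$, which is \ref{proof:long_tail_3}.

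The only genuinely delicate point is the justification of the hazard representation itself — namely that absolute continuity of $F$, strict positivity of $\bar{F}$ on $\Rpos$, and local integrability of $r$ are all in force under the lemma's hypotheses — after which both equivalences collapse to short computations. I would also emphasize in the write-up that the assumption that $\lim_{t\to\infty} r(t)$ exists is used \emph{only} to upgrade the averaged statement \ref{proof:long_tail_2} to the pointwise statement \ref{proof:long_tail_3}; the equivalence between long-tailedness and \ref{proof:long_tail_2} holds unconditionally, which explains why \ref{proof:long_tail_2} is the natural intermediate statement to add in the stronger restatement.
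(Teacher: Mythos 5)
Your proposal is correct and follows essentially the same route as the paper's proof: both rest on the hazard representation $\prob{X>x}=\exp\bigl(-\integralLow[t]{0}{x}{r(t)}\bigr)$, derive the equivalence of \ref{proof:long_tail_1} and \ref{proof:long_tail_2} from the resulting formula for the ratio of survival probabilities, and settle \ref{proof:long_tail_2} $\Leftrightarrow$ \ref{proof:long_tail_3} by the trivial upper bound in one direction and contraposition via a lower bound $r(t)\geq c$ in the other. Your additional remarks --- justifying the hazard representation rather than citing it as well known, and observing that the existence of $\lim_{t\to\infty}r(t)$ is needed only for the step from \ref{proof:long_tail_2} to \ref{proof:long_tail_3} --- are welcome refinements but do not change the argument.
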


\begin{proof}[Proof of \refLem{lem:restateorig}]
	The proof is taken from~\cite{nair2020fundamentals}. A well-known property of the hazard rate function is:
	\begin{align}
		\prob{X > x} = \exp{\left(-\integralLow[t]{0}{x}{r(t)}\right)}.\label{eq:hazard_rate_connection_survival}
	\end{align}
	In the following, let $y \in \Rpos$ be any positive real number. First, we show the equivalence of the first and second statement.
	In line with the definition of long-tail distributions (see \refDef{def:long_tail}), we examine the quotient~$\frac{\prob{X > x+y}}{\prob{X>x}}$ for this purpose and apply the characterization from \refEq{eq:hazard_rate_connection_survival}:
	\begin{align*}
		\frac{\prob{X > x+y}}
		{\prob{X > x}} =
		\frac{\exp{\Big(-\integralLow[t]{0}{x+y}{r(t)}\Big)}}
		{\exp{\Big(-\integralLow[t]{0}{x}{r(t)}\Big)}} =
		\exp{\left(-\integral[t]{x}{x+y}{r(t)}\right)}.
	\end{align*}
	As one can readily infer from this equation, \mbox{$\Lim{x} \frac{\prob{X > x+y}}{\prob{X>x}}=1$} is true if and only if \mbox{$\Lim{x} \integralLow[t]{x}{x+y}{r(t)} = 0$} is true. Thus, the equivalence of the first and second statements has been demonstrated.
	
	In the next step, we show that the second statement implies the third statement, using logical contraposition to prove this. We, therefore, assume that \mbox{$\liminf\limits_{t\rightarrow \infty}r(t) > 0$} holds, or in other words, the following holds:
	\begin{align*}
		\exists C>0 \quantorsep \exists x_0 > 0 \quantorsep \forall x > x_0: \quantorsep
		r(x) \geq C.
	\end{align*}
	By exploiting this property, we are able to estimate the integral \mbox{$\integralLow[t]{x}{x+y}{r(t)}$}:
	\begin{align*}
		\forall x > x_0:\quantorsep
		\integral[t]{x}{x+y}{r(t)} \geq \integral[t]{x}{x+y}{C} = y\cdot C > 0.
	\end{align*}
	Thus it is shown that if the third statement does not hold, the second statement does not hold either.
	
	In the last step, we shall demonstrate that the third statement implies the second statement. Since $r(t)\geq 0$ holds for all $t\in \Rpos$, one may express \mbox{$\Lim{t} r(t)=0$} as follows:
	\begin{align*}
		\forall C>0 \quantorsep \exists x_0 > 0 \quantorsep \forall x > x_0:\quantorsep
		r(x) < C.
	\end{align*}
	Once again, we may exploit this property to estimate the integral~\mbox{$\integralLow[t]{x}{x+y}{r(t)}$}:
	\begin{align*}
		\forall x > x_0:\quantorsep
		\integral[t]{x}{x+y}{r(t)} < \integral[t]{x}{x+y}{C} = y\cdot C.
	\end{align*}
	Since this estimate holds for all $C>0$, this immediately yields \mbox{$\LIM{x} \integral[t]{x}{x+y}{r(t)} = 0$}.
\end{proof}

\end{document}